\def\BibTeX{{\rm B\kern-.05em{\sc i\kern-.025em b}\kern-.08em
    T\kern-.1667em\lower.7ex\hbox{E}\kern-.125emX}}
\newtheorem{theorem}{Theorem}
\newtheorem{proposition}{Proposition}
\newtheorem{remark}{Remark}
\begin{document}
\title
{
A Peaceman-Rachford Splitting Approach with Deep Equilibrium Network for Channel Estimation 
\thanks{This work was partially supported by National Natural Science
Foundation of China (12271289 and 12025104).}
}

\author{
    \IEEEauthorblockN{Dingli Yuan}
    \IEEEauthorblockA{
        \textit{Department of Mathematical Sciences} \\
        \textit{Tsinghua University}\\
        Beijing, China \\
        Email: yuandl22@mails.tsinghua.edu.cn}
    ~\\
    \and
    \IEEEauthorblockN{Shitong Wu*}
    \IEEEauthorblockA{
        \textit{Department of Mathematical Sciences} \\
        \textit{Tsinghua University}\\
        Beijing, China \\
        Email: wust20@mails.tsinghua.edu.cn}
    *Corresponding author
    ~\\
    \and
    \IEEEauthorblockN{Haoran Tang}
    \IEEEauthorblockA{
        \textit{Department of Mathematical Sciences} \\
        \textit{Tsinghua University}\\
        Beijing, China \\
        Email: thr22@mails.tsinghua.edu.cn}
    ~\\
    \and
    \IEEEauthorblockN{Lu Yang}
    \IEEEauthorblockA{
        \textit{Wireless Technology Lab} \\
        \textit{Central Research Institute, 2012 Labs, Huawei Tech. Co. Ltd.}\\
        China \\
        Email: yanglu87@huawei.com}
    ~\\
    \and
    \IEEEauthorblockN{Chenghui Peng}
    \IEEEauthorblockA{
        \textit{Wireless Technology Lab} \\
        \textit{Central Research Institute, 2012 Labs, Huawei Tech. Co. Ltd.}\\
        China \\
        Email: pengchenghui@huawei.com}
}

\maketitle

\begin{abstract}
Multiple-input multiple-output (MIMO) is pivotal for wireless systems, yet its high-dimensional, stochastic channel poses significant challenges for accurate estimation, highlighting the critical need for robust estimation techniques. In this paper, we introduce a novel channel estimation method for the MIMO system. The main idea is to construct a fixed-point equation for channel estimation, which can be implemented into the deep equilibrium (DEQ) model with a fixed network.
Specifically, the Peaceman-Rachford (PR) splitting method is applied to the dual form of the regularized minimization problem to construct fixed-point equation with non-expansive property.
Then, the fixed-point equation is implemented into the DEQ model with a fixed layer, leveraging its advantage of the low training complexity.
Moreover, we provide a rigorous theoretical analysis, demonstrating the convergence and optimality of our approach. Additionally, simulations of hybrid far- and near-field channels demonstrate that our approach yields favorable results, indicating its ability to advance channel estimation in MIMO system. 
\end{abstract}
\begin{IEEEkeywords}
channel estimation, deep learning, fixed-point, Peaceman-Rachford splitting method, deep equilibrium model, hybrid-field.
\end{IEEEkeywords}

\section{Introduction}

Multiple-input multiple-output (MIMO) technology is pivotal in 6G wireless systems \cite{sarieddeen2021overview}. 
Recognizing its paramount importance, the channel estimation algorithms should have high adaptability and  performance for different types of channels in the MIMO scenario. 
For example, it need to be feasible for the hybrid far- and near-field channels \cite{wei2021channel_hybrid,yu2022hybrid}, etc. 
Hence, there is a strong motivation for improvements in channel estimation algorithms with broad applicability and high accuracy.

Traditional channel estimation schemes with closed-form expressions, such as least square (LS) and minimum mean-squared error (MMSE), have limitations in practical applications: the LS significantly lacks in estimation performance, while the MMSE's complexity is prohibitively high \cite{tuchler2002minimum}. 
Guided by the pursuit of higher estimation performance and lower computation complexity, numerous iterative algorithms have been introduced. 
One approach involves enhancing estimation performance by adding regularization terms after least squares enabling the use of optimization methods, such as proximal gradient descent (PGD) or alternating direction method of multipliers (ADMM) \cite{boyd2011distributed}. 
Another approach employs probabilistic models with certain types of measurement matrices, such as Gaussian or row-orthogonal matrices, which utilizes prior information assumed theoretically to minimize the mean square error.
Classical algorithms include approximate message passing (AMP) \cite{donoho2009message}, Orthogonal AMP (OAMP) algorithms \cite{ma2017orthogonal}, and their derivative algorithms \cite{rangan2019vector}. 
Despite differing approach, these algorithms share similar iteration update rules, composed of a linear estimator (LE) and a non-linear estimator (NLE) \cite{yu2023adaptive,gilton2021DEM_inverse}. 
The LE is explicit with low complexity. 
The bottleneck of a high performance estimator design lies in the NLE \cite{yu2023adaptive}, since its design requires prior knowledge of the channel, which is of high complexity and difficult to acquire. 
A heuristic way is to replace NLE with neural network (NN) \cite{he2020oamp_net_v2}, which has advantage in capturing data features. 
Additionally, deep neural networks, which have demonstrated strong denoising capabilities \cite{liu2022denoising}, are theoretically well-suited to replace the NLE.
Therefore, this approach by replacing the NLE with network motivates the thriving of model-driven channel estimators and demonstrates significant performance in applications \cite{he20modeldriven}. 

However, directly replacing NLE with deep learning-based estimators, particularly those employing deep unfolding networks, presents some systematic challenges, as they are constructed by unfolding a classical iterative algorithm across multiple layers \cite{zhang2018ista_net}.
First, its reliability lacks theoretical guarantees, as truncating the algorithm into $T$ layers disrupts the convergence of a classical iterative algorithm. 
Second, it constructs different NNs for each layer of iteration, which entails the intermediate states and gradients each layer, resulting in substantial memory and computational overhead during training and exhibiting poor scalability for deeper number of truncated layers.
Considering these issues, a reconsideration of the feasibility of the deep unfolding framework is imperative.
Therefore, \cite{bai2019DEM} proposed the deep equilibrium (DEQ) model, which utilizes a single fixed layer to represent a fixed-point equation, ensuring that the output converges to the fixed point as the number of iterations approaches infinity.
Moreover, since the DEQ model uses a fixed layer for each iteration, its training complexity, compared to the deep unfolding network with complexity $O(T)$, can be reduced to $O(1)$ \cite{bai2019DEM}. 
Recent works have consider the DEQ model with its application in inverse problem \cite{gilton2021DEM_inverse}, image demonising \cite{liu2022denoising}, video snapshot compressive imaging \cite{zhao2022deep_snapshot} and hybrid far-and near-field channel estimation \cite{yu2022hybrid,yu2023adaptive}. 
In these work, they replace the NLE with a fixed-point network and trained the model by controlling the Lipschitz constant to ensure a linear convergence rate \cite{pabbaraju2020lips}. 
Specifically, reference \cite{yu2022hybrid} demonstrates the powerful performance in hybrid-field channel estimation by learning OAMP algorithm with a fixed-point network.
However, the OAMP framework requires certain assumptions about the measurement matrix \cite{ma2017orthogonal,yu2022hybrid}, leading to the adjustment and check of the network's Lipschitz constant in each iteration to ensure the contraction property of the DEQ model, which not only increases the complexity but also reduces the reliability of the solutions.
Hence, to fully utilize the DEQ model with a fixed-point network, a carefully designed fixed-point equation needs to be constructed, ensuring that the system possesses a controllable Lipschitz constant \cite{pabbaraju2020lips}.

In this paper, we derive a fixed-point formulation for the application of the DEQ model with a controllable Lipschitz constant in channel estimation, offering convergence guarantees and reduced training complexity.
Unlike traditional methods that model channel estimation problem through minimization from the prime perspective, our approach explores the problem from the dual formulation to leverage the inherent convexity of the dual perspective. 
Building on this, we apply the Peaceman-Rachford (PR) splitting method to derive the fixed-point equation on the dual problem, ensuring a non-expansive property of the combined operator \footnote{non-expansive means that the Lipschitz constant is no larger than one \cite{Bauschke2011monotone_book}} and robust convergence properties of the fixed-point iteration. 
To simplify the calculation, we derive an alternative algorithm for the Peaceman-Rachford splitting method in the dual form, and unfold one iteration of the proposed algorithm into a fixed layer by replacing the nonlinear term with a specially-trained neural network. 
Since this approach leverages the DEQ framework for analytical computation of the fixed-point iteration, it is named as Peaceman-Rachford splitting method implemented with the Deep Equilibrium Network (PR-DEN). 
Additionally, we present a comprehensive theoretical analysis providing rigorous proofs of both convergence and optimality.
Extensive simulations, including the scenario with MIMO systems featuring hybrid far- and near-field channels, validate the robust performance of our methodology, which underscore the potential of our proposed approach to significantly advance channel estimation in MIMO systems.

\section{System Model and Problem Formulation}

In this work, we consider the uplink channel estimation for MIMO system as shown in Fig. \ref{model_system}. 
The base station (BS) is equipped with $S$ uniform planar arrays (UPAs) arranged in a $\sqrt{S} \times \sqrt{S}$ configuration and has a total of $N$ antennas.
Specifically, the separation between adjacent UPAs is denoted as $d_\text{UPA}$. 
In each UPA, the antennas are arranged in a $\sqrt{N/S}\times\sqrt{N/S}$ configuration ($\sqrt{N/S} \in \mathbb{Z}^+$) with the antenna spacing of $d$. 
Additionally, the BS is employed with a hybrid precoding architecture \cite{alkhateeb2015limited}, featuring $N_{\text{RF}}$ radio-frequency (RF) chains and serving $K$ single-antenna users, where $N_{\text{RF}}\ll N$ and $K \leq N_{\text{RF}}$. The antennas in each UPA share the same RF chain.
For the uplink channel estimation, we assume the $K$ users transmit mutual orthogonal pilot sequences to the BS \cite{tse2005fundamentals}, then channel estimation for each user
is independent. Therefore, without loss of generality, we consider the scenario where an arbitrary user is involved.
\begin{figure}[b]
\centering
\vspace{-0.98em} 
\includegraphics[width=1\linewidth]{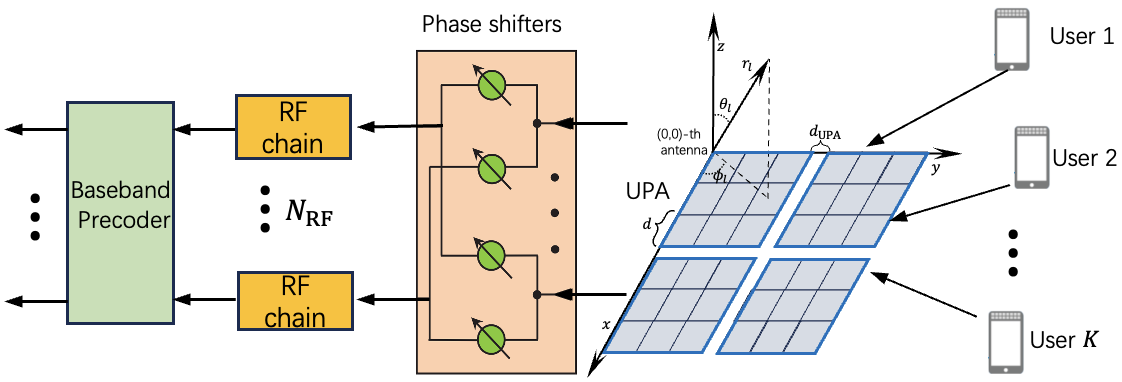}
\caption{System model of the uplink channel estimation.}
\label{model_system}
\end{figure}

The user transmits known pilot signals to the BS for $P$ time slots, and we denote $s_{p}\in\mathbb{C}$ as the transmit pilot in the time slot $p$. 
Then, the received pilot $\boldsymbol{y}_{p} \in \mathbb{C}^{N_{\mathrm{RF}} \times 1}$ is
$$
\boldsymbol{y}_{p}=\mathbf{W}_p\mathbf{A}_{p}\mathbf{F} \boldsymbol{h} s_{ p}+\mathbf{W}_p\mathbf{A}_{p} \boldsymbol{n}_{p},
$$
where $\mathbf{F} \in \mathbb{C}^{N\times N}$ denotes the Fourier transform matrix transforming the channel $\boldsymbol{h}\in\mathbb{C}^{N\times 1}$ into its angular-domain, $\mathbf{A}_{p} \in \mathbb{C}^{N_{\mathrm{RF}} \times N}$ denotes the analog combining matrix and satisfies the constant modulus constraint, \textit{i.e.}, $\left| \mathbf{A}_{p}(i, j)\right|=\frac{1}{\sqrt{N}}$ \cite{he2018deep}, $\mathbf{W}_p \in \mathbb{C}^{N_{\mathrm{RF}}\times N_{\mathrm{RF}}}$ denotes the digital combining matrix, and $\boldsymbol{n}_{p} \in \mathbb{C}^{N \times 1}$ denotes the Gaussian complex noise following the distribution $\mathcal{CN}(0, \sigma^{2} \mathbf{I}_{N})$.
Denote the overall received pilot sequence $\boldsymbol{y}=\left[\boldsymbol{y}_{1}^{T}, \dots, \boldsymbol{y}_{P}^{T}\right]^{T}$ and set $s_q$ as $1$, then 
\begin{equation}
 \label{complex_form}
\boldsymbol{y}=\mathbf{A}\boldsymbol{h}+\boldsymbol{n},
\end{equation}
where $\boldsymbol{n}=\left[\boldsymbol{n}_{1}^{T} \mathbf{A}_{1}^{T}\mathbf{W}_{1}^{T}, \dots, \boldsymbol{n}_{P}^{T} \mathbf{A}_{P}^{T}\mathbf{W}_{P}^{T}\right]^{T} \in \mathbb{C}^{ N_{\mathrm{RF}}P \times 1}$ denotes the noise and $\mathbf{A}=\left[\mathbf{F}^T\mathbf{A}_{1}^{T}\mathbf{W}_{1}^{T},\dots, \mathbf{F}^T\mathbf{A}_{P}^{T} \mathbf{W}_{P}^{T}\right]^{T} \in \mathbb{C}^{N_{\mathrm{RF}}P \times N} $ denotes the overall analog combining matrix.

In the following, we adopt a real form where the matrices and vectors $\mathbf{A}$, $\boldsymbol{y}$, $\boldsymbol{h}$ and $\boldsymbol{n}$ are substituted by: 
$$\begin{pmatrix}
\Re(\mathbf{A}) & -\Im(\mathbf{A}) \\
\Im(\mathbf{A}) & \Re(\mathbf{A})
\end{pmatrix},
\begin{pmatrix}
    \Re(\boldsymbol{y}) \\  \Im(\boldsymbol{y})
\end{pmatrix},
\begin{pmatrix}
    \Re(\boldsymbol{h}) \\  \Im(\boldsymbol{h})
\end{pmatrix},
\begin{pmatrix}
    \Re(\boldsymbol{n}) \\  \Im(\boldsymbol{n})
\end{pmatrix}, 
$$ 
where $\Im$ and $\Re$ denote the imaginary and real parts.

According to \eqref{complex_form}, the goal of the channel estimation is to recover the channel $\boldsymbol{h}$ with the knowledge of $\mathbf{A}$, $\boldsymbol{y}$ and the estimation of the noise $\boldsymbol{n}$.
Since matrix $\mathbf{A}$ may be ill-posed, directly applying LS for channel estimation does not work well \cite{boyd2018introduction}.
Hence, by introducing a regularization term \cite{beck2009fista} that leverages prior information about the channel, \eqref{complex_form} can be transformed into the following optimization problem \cite{hu2023learnable}:
\begin{equation}
    \label{eq_form_cs}
    \min_{\boldsymbol{h}} ~~ g(\boldsymbol{h})+\frac{1}{2}\|\boldsymbol{y}-\mathbf{A}\boldsymbol{h}\|_2^2, 
\end{equation}
where $g(\boldsymbol{h})$ denotes the regularization function which encompasses the prior information of the channel characteristics \cite{rangan2019vector} from a Bayesian perspective in Maximum A Posterior (MAP) estimation theory \cite{chan2016plug}.

\begin{remark}
The nonlinear regularization term in \eqref{eq_form_cs} mitigates the ill-posed nature of the channel estimation problem by effectively capturing key characteristics such as low-rank structure and sparse representations \cite{mou2022deep, li18lowrank, qi2011optimized}.
\end{remark}

\section{A Peaceman-Rachford Splitting Approach} 
\label{sec:A Peaceman-Rachford Splitting Approach with Deep Equilibrium Network}

In this section, we derive a fixed-point formulation using the PR splitting method to find the optimal solution of the dual form of \eqref{eq_form_cs}. 
We start by examining the dual form, leveraging its inherent convexity to ensure robust performance in the PR splitting method. 
Due to the intractable conjugate functions in the dual form, we derive Algorithm \ref{HPR2_alg} to address this challenge, validating its effectiveness with convergence analysis.

\subsection{Dual Problem}
As presented in the previous discussion, the regularized term $g(\boldsymbol{h})$ in \eqref{eq_form_cs} encapsulates the prior information of channel characteristics, which may not be convex.
In order to solve the problem in a more stable and robust manner, we consider the dual form \cite{rockafellar1970convex} of the above regularized problem, which is a fundamental technique in optimization \cite{boyd2004convex}.
We introduce auxiliary variables $\boldsymbol{p}$ and $\boldsymbol{q}$, constrained by $\boldsymbol{p} - \boldsymbol{q} = 0$, specifically to replace $\boldsymbol{h}$ and facilitate the derivation of the dual formulation in \eqref{eq_form_cs}.
Substituting these variables, the equivalent problem can be derived:
\begin{equation}
\label{equi_form}
\begin{aligned}
    &\min_{\boldsymbol{p},\boldsymbol{q}}& \quad g(\boldsymbol{p})+f(\boldsymbol{q}) \\
    &\text{ s.t.}& \quad\quad \boldsymbol{p}-\boldsymbol{q}=0,
\end{aligned}
\end{equation}
where $f(\boldsymbol{h}) = \frac{1}{2} \|\boldsymbol{y} - \mathbf{A}\boldsymbol{h}\|_2^2$. 
The minimization problem \eqref{equi_form} is equivalent to \eqref{eq_form_cs}, with $\boldsymbol{p}^*$, $\boldsymbol{q}^*$ the optimal solutions of \eqref{equi_form}, equating to $\boldsymbol{h}^*$ the optimal solutions of \eqref{eq_form_cs}. We will develop the subsequent content based on \eqref{equi_form}.

The Lagrangian for this problem with the Lagrange multiplier $\boldsymbol{w}$ (also known as the dual variable) is given by:
\begin{equation*}
\mathcal{L}(\boldsymbol{p}, \boldsymbol{q}; \boldsymbol{w}) = g(\boldsymbol{p}) + f(\boldsymbol{q}) + \boldsymbol{w}^\top(\boldsymbol{p} - \boldsymbol{q}).
\end{equation*}
Minimizing the Lagrangian over $\boldsymbol{p}$ and $\boldsymbol{q}$, we have: 
\begin{equation*}
\inf_{\boldsymbol{p}, \boldsymbol{q}} \mathcal{L}(\boldsymbol{p}, \boldsymbol{q}; \boldsymbol{w}) = \inf_{\boldsymbol{p}} (g(\boldsymbol{p}) + \boldsymbol{w}^\top \boldsymbol{p}) + \inf_{\boldsymbol{q}} (f(\boldsymbol{q}) - \boldsymbol{w}^\top \boldsymbol{q}).
\end{equation*}
Then, by employing the Fenchel conjugate \cite{boyd2004convex}:
\begin{equation*}
f^*(\boldsymbol{w}) = \sup_{\boldsymbol{y}} (\boldsymbol{w}^\top \boldsymbol{y} - f(\boldsymbol{y})), \quad g^*(\boldsymbol{w}) = \sup_{\boldsymbol{y}} (\boldsymbol{w}^\top \boldsymbol{y} - g(\boldsymbol{y})),
\end{equation*}
the dual form of \eqref{eq_form_cs} is expressed as:
\begin{equation}
\label{dual_pro}
    \max_{\boldsymbol{w}}~~ \{-g^*(-\boldsymbol{w})-f^*(\boldsymbol{w})\}. 
\end{equation}
Considering the inherent convexity of the Fenchel conjugate, this choice guarantees the dual form's convexity, thereby enhancing the robustness of the PR splitting method application.

\subsection{Peaceman-Rachford Splitting Algorithm}

Based on the derived dual form, we apply the PR splitting method to construct the fixed-point equation.
Using first-order conditions, the optimal solution of \eqref{dual_pro} is equivalent to solving the following rooting problem \cite{boyd2011distributed,Bauschke2011monotone_book}:
\begin{equation}
\label{rooting_pro}
    0 \in \partial g^*(-\boldsymbol{w})+ \partial f^*(\boldsymbol{w}),
\end{equation}
where $\partial$ denotes the subgradient of a convex function. The definition of subgradient can be referred to \cite{boyd2004convex}, and subgradient  can directly be proved to be a maximal monotone operator \cite{rockafellar1970convex}.

Taking $\mathbf{M}_1=\partial (g^*\circ (-\mathbf{I}))$, $\mathbf{M}_2=\partial (f^*)$, then $\mathbf{M}_1, \mathbf{M}_2$ are maximal monotone operators. 
Hence, the PR splitting method \cite{combettes2023geometry_monotone,zhang2022HPR_OT} can be directly applied to find the solution of rooting problem \eqref{rooting_pro} \cite{lions1979splitting}. 
That is, the solution of \eqref{rooting_pro} is given by 
\begin{equation*}
\boldsymbol{w} = \mathbf{J}_{\sigma \mathbf{M}_2}\boldsymbol{\eta},
\end{equation*}
where $\boldsymbol{\eta}$ satisfies the fixed-point equation \cite{lions1979splitting}:
\begin{equation}\label{PR_splitting}
\mathbf{R}_{\sigma \mathbf{M}_1}\mathbf{R}_{\sigma \mathbf{M}_2}\boldsymbol{\eta} = \boldsymbol{\eta}.
\end{equation}
Here, $\boldsymbol{\eta}$ is an intermediate variable and $\sigma \in \mathbb{R}^+$.
The term $\mathbf{J_{\sigma\mathbf{M}}} \triangleq (\mathbf{I} + \sigma\mathbf{M})^{-1}$ represents the resolvent of $\sigma\mathbf{M}$, and $\mathbf{R_{\sigma \mathbf{M}}} \triangleq 2\mathbf{J_{\sigma\mathbf{M}}} - \mathbf{I}$ is known as the reflected resolvent \cite{Bauschke2011monotone_book}. 

Specifically, the resolvent $\mathbf{J_{\sigma\partial h}}$ of the subgradient of any convex function $h$ with a given $\sigma$, denoted as $(\mathbf{I} + \sigma \partial h )^{-1}$, can be expressed using the proximal operator \cite{boyd2004convex}.

\begin{equation*}
    \begin{aligned}
    (\mathbf{I}+\sigma \partial h )^{-1}(x) = & \text{Prox}_{\sigma h}(x)\\
    \triangleq & \text{argmin}_{z}(h(z)+\frac{1}{2\sigma}\|z-x\|^2).
    \end{aligned}
\end{equation*}

\begin{remark}
\label{remark_nonexpansive}
    Since $\mathbf{M}$ is a maximal momotone operator, the reflected resolvent $\mathbf{R_{\sigma M}}$ is nonexpansive \cite[Corollary 23.11]{Bauschke2011monotone_book}, \textit{i.e.}, the Lipschitz constant of $\mathbf{R}_{\sigma \mathbf{M}}$ is not larger than one. %
    Hence, the composition $\mathbf{R}_{\sigma \mathbf{M}_1}\mathbf{R}_{\sigma \mathbf{M}_2}$ is also nonexpansive.
\end{remark}

To solve the fixed-point equation \eqref{PR_splitting}, a common approach is to use fixed-point iteration. Starting with an initial input $\boldsymbol{\eta}^0$, the following iteration is repeated until convergence:
\begin{equation}
\label{fix_eq}
\boldsymbol{\eta}^{k+1} = \mathbf{R}_{\sigma \mathbf{M}_1}\mathbf{R}_{\sigma \mathbf{M}_2}(\boldsymbol{\eta}^{k}).
\end{equation}

\subsection{Detailed Implementation and Convergence Analysis}

Since the reflected resolvent  $\mathbf{R}_{\sigma \mathbf{M}_1}$ and $\mathbf{R}_{\sigma \mathbf{M}_2}$ contain the conjecture function $f^*$ and $g^*$, it is challenging for the direct implementation.
Hence, we derive a simplified algorithm from PR splitting method on the dual form to avoid the difficulty.

Here, two intermediate variables $\boldsymbol{p}$ and $\boldsymbol{q}$ are introduced as follows, which is similar to the proof in \cite{zhang2022HPR_OT}, 
\begin{align*}
    \boldsymbol{q}^{k+1} &= \mathbf{J}_{\sigma \partial f}(\boldsymbol{\eta}^k), \\
    \boldsymbol{p}^{k+1} &= \mathbf{J}_{\sigma^{-1}\partial g}\left(\sigma^{-1}(2\sigma \boldsymbol{q}^{k+1} - \boldsymbol{\eta}^k)\right).
\end{align*}
Inspired by \cite{zhang2022HPR_OT}, we develop Algorithm \ref{HPR2_alg} and Theorem \ref{thm_converge}, as an adaptation of the proof described therein. 
Additionally, it can be proved that the intermediate variables $\boldsymbol{p}^k$ and $\boldsymbol{q}^k$ converge to the optimal solution $\boldsymbol{p}^*$ and $\boldsymbol{q}^*$ of the prime problem \eqref{equi_form}, which will be stated in Theorem \ref{thm_converge}. 

\begin{algorithm}[ht]
\caption{PR Splitting for Dual Problem}
\label{HPR2_alg}
\begin{algorithmic}[1]
\STATE \textbf{Input:} $\boldsymbol{x}^0, \boldsymbol{p}^0= \boldsymbol{x}^0$, and $\sigma >0$.
\STATE \textbf{Initialize:} $\boldsymbol{\eta}^0=\boldsymbol{x}^0+\sigma \boldsymbol{p}^0.$
\FOR{$k=0,1,\ldots L\!-\!1$} 
    \STATE $\boldsymbol{q}^{k+1}=(\mathbf{A}^{\top}\mathbf{A}+\sigma \mathbf{I})^{-1}(\mathbf{A}^{\top}\boldsymbol{y}+\boldsymbol{\eta}^{k}).$
    \STATE $\boldsymbol{w}^{k+1}= \boldsymbol{\eta}^k -\sigma\boldsymbol{q}^{k+1}.$
    \STATE $\boldsymbol{p}^{k+1}=\operatorname{Prox}_{{\sigma}^{-1}g}({\sigma}^{-1}(2\sigma \boldsymbol{q}^{k+1}-\boldsymbol{\eta}^k)).$
    \STATE $\boldsymbol{x}^{k+1}=\boldsymbol{\eta}^{k}+\sigma \boldsymbol{p}^{k+1}-2\sigma \boldsymbol{q}^{k+1}.$
    \STATE $\boldsymbol{\eta}^{k+1}=\boldsymbol{\eta}^{k}+2\sigma(\boldsymbol{p}^{k+1}-\boldsymbol{q}^{k+1}).$
\ENDFOR
\STATE \textbf{Output}: $\boldsymbol{x}^L$ 
\end{algorithmic}
\end{algorithm}

\begin{proposition}
\label{conicide_pro}
     If the initialization point \( \boldsymbol{\eta}^0 \) in Algorithm \ref{HPR2_alg} matches \( \boldsymbol{\eta}^0 \) in \eqref{fix_eq}, the sequence $\{ \boldsymbol{\eta}^k\}$ generated by Algorithm \ref{HPR2_alg} coincides with the sequence $\{\boldsymbol{\eta}^k\}$ generated by \eqref{fix_eq}.
\end{proposition}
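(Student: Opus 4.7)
The plan is to prove the statement by induction on $k$. The base case is immediate by hypothesis, and for the inductive step, assuming the two sequences agree at $\boldsymbol{\eta}^k$, I would verify that lines 4--8 of Algorithm \ref{HPR2_alg} evaluate $\mathbf{R}_{\sigma \mathbf{M}_1}\mathbf{R}_{\sigma \mathbf{M}_2}(\boldsymbol{\eta}^k)$ exactly. The central tool is Moreau's identity: for any proper closed convex $h$ and any $\sigma>0$, $\mathbf{J}_{\sigma \partial h^{*}}(\boldsymbol{x}) = \boldsymbol{x} - \sigma \operatorname{Prox}_{\sigma^{-1} h}(\sigma^{-1}\boldsymbol{x})$. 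This allows me to rewrite the intractable resolvents of the conjugate subdifferentials appearing in \eqref{fix_eq} as proximal operators of the primal functions $f$ and $g$ that are actually used in Algorithm \ref{HPR2_alg}.

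First I would unpack the inner reflected resolvent. Applying Moreau's identity to $\mathbf{M}_2 = \partial f^{*}$ gives $\mathbf{R}_{\sigma \mathbf{M}_2}(\boldsymbol{\eta}^k) = \boldsymbol{\eta}^k - 2\sigma \operatorname{Prox}_{\sigma^{-1} f}(\sigma^{-1}\boldsymbol{\eta}^k)$. Since $f(\boldsymbol{q}) = \tfrac12\|\boldsymbol{y} - \mathbf{A}\boldsymbol{q}\|_2^{2}$ is quadratic, setting the gradient of the proximal subproblem to zero yields the closed form $(\mathbf{A}^\top \mathbf{A} + \sigma \mathbf{I})^{-1}(\mathbf{A}^\top \boldsymbol{y} + \boldsymbol{\eta}^k)$, which matches $\boldsymbol{q}^{k+1}$ in line 4. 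Hence $\mathbf{R}_{\sigma \mathbf{M}_2}(\boldsymbol{\eta}^k) = \boldsymbol{\eta}^k - 2\sigma \boldsymbol{q}^{k+1}$, and incidentally line 5's $\boldsymbol{w}^{k+1} = \boldsymbol{\eta}^k - \sigma \boldsymbol{q}^{k+1}$ is exactly $\mathbf{J}_{\sigma \partial f^{*}}(\boldsymbol{\eta}^k)$, so it recovers the dual iterate promised in the discussion around \eqref{PR_splitting}.

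For the outer reflected resolvent, the complication is that $\mathbf{M}_1 = \partial(g^{*} \circ (-\mathbf{I}))$ involves a sign flip. Setting $G(\boldsymbol{w}) = g^{*}(-\boldsymbol{w})$ gives $\partial G(\boldsymbol{w}) = -\partial g^{*}(-\boldsymbol{w})$, and transforming the defining inclusion $\boldsymbol{z} \in \boldsymbol{w} + \sigma\mathbf{M}_1(\boldsymbol{w})$ via the substitution $\boldsymbol{u} = -\boldsymbol{w}$ yields the identity $\mathbf{J}_{\sigma \mathbf{M}_1}(\boldsymbol{z}) = -\mathbf{J}_{\sigma \partial g^{*}}(-\boldsymbol{z})$. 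Combining this with Moreau's identity produces $\mathbf{R}_{\sigma \mathbf{M}_1}(\boldsymbol{z}) = \boldsymbol{z} + 2\sigma \operatorname{Prox}_{\sigma^{-1} g}(-\sigma^{-1}\boldsymbol{z})$. Substituting $\boldsymbol{z} = \boldsymbol{\eta}^k - 2\sigma \boldsymbol{q}^{k+1}$, the argument of the proximal simplifies to $\sigma^{-1}(2\sigma \boldsymbol{q}^{k+1} - \boldsymbol{\eta}^k)$, which is exactly $\boldsymbol{p}^{k+1}$ in line 6, and the whole composition collapses to $\boldsymbol{\eta}^k + 2\sigma(\boldsymbol{p}^{k+1} - \boldsymbol{q}^{k+1})$, matching line 8. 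I expect the main obstacle to be the careful bookkeeping around this sign flip together with the scaling conventions of Moreau's identity (parameter $\sigma$ on the conjugate side versus $\sigma^{-1}$ on the primal side, with the input rescaled by $\sigma^{-1}$); once these relations are in place, the matching with Algorithm \ref{HPR2_alg} is routine algebra.
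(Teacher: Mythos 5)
Your proposal is correct and follows essentially the same route as the paper: induction on $k$, with the inductive step showing that lines 4--8 of Algorithm \ref{HPR2_alg} compute exactly $\mathbf{R}_{\sigma \mathbf{M}_1}\mathbf{R}_{\sigma \mathbf{M}_2}(\boldsymbol{\eta}^k)$ by converting the resolvents of $\partial f^*$ and $\partial(g^*\circ(-\mathbf{I}))$ into primal proximal operators. The only presentational difference is that you invoke Moreau's identity (together with the sign-flip identity $\mathbf{J}_{\sigma\mathbf{M}_1}(\boldsymbol{z})=-\mathbf{J}_{\sigma\partial g^*}(-\boldsymbol{z})$) as a packaged lemma, whereas the paper re-derives the same facts pointwise from the inclusion $\boldsymbol{x}\in\partial f^*(\boldsymbol{x}^*)\Leftrightarrow \boldsymbol{x}^*\in\partial f(\boldsymbol{x})$ via \cite[Theorem 23.5]{rockafellar1970convex}; your version is a cleaner bookkeeping of the same argument.
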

\begin{proof}
First, as mentioned before, the resolvent be represented by a proximal operator.
Thus, we obtain 
\begin{equation*}
    \begin{aligned}
        \boldsymbol{q}^{k+1}&= \mathbf{J}_{\sigma \partial f}(\boldsymbol{\eta}^k) \\
        &= (\mathbf{A}^{\top}\mathbf{A}+\sigma \mathbf{I})^{-1}(\mathbf{A}^{\top}\boldsymbol{y}+\boldsymbol{\eta}^{k}),\\
\boldsymbol{p}^{k+1}&=\mathbf{J}_{\sigma^{-1}\partial g}(\sigma^{-1}(2\sigma \boldsymbol{q}^{k+1}-\boldsymbol{\eta}^k))\\
&= \operatorname{Prox}_{{\sigma}^{-1}g}({\sigma}^{-1}(2\sigma \boldsymbol{q}^{k+1}-\boldsymbol{\eta}^k)).
    \end{aligned}
\end{equation*}

The statement is then proved by induction. 
For $k=0$, we use the aforementioned definition of $\boldsymbol{q}^1$, yielding the condition $0 \in \partial f(\boldsymbol{q}^1)-\boldsymbol{\eta}^0 + \sigma \boldsymbol{q}^1$. 
Thus, $\boldsymbol{\eta}^0  -\sigma \boldsymbol{q}^1 \in \partial f(\boldsymbol{q}^1)$. 
Invoking Theorem 23.5 in \cite{rockafellar1970convex}, we deduce that
\begin{equation*}
    \boldsymbol{q}^1 \in \partial f^*(\boldsymbol{\eta}^0 - \sigma \boldsymbol{q}^1).
\end{equation*}
Hence, $\sigma \boldsymbol{q}^1 = \sigma \partial f^*(\boldsymbol{\eta}^0 - \sigma \boldsymbol{q}^1)$.
This means that
\begin{equation*}
    \boldsymbol{\eta}^0 \in \boldsymbol{\eta}^0 -\sigma \boldsymbol{q}^1 +\sigma \partial f^*(\boldsymbol{\eta}^0 - \sigma \boldsymbol{q}^1).
\end{equation*}
Since $\mathbf{M}_2 = \partial f^*$, it follows that
\begin{equation*}
    \boldsymbol{\eta}^0 \in \boldsymbol{\eta}^0 - \sigma \boldsymbol{q}^1 + \sigma \mathbf{M}_2(\boldsymbol{\eta}^0 -\sigma  \boldsymbol{q}^1),
\end{equation*}
Let $\boldsymbol{w}^k\triangleq \mathbf{J}_{\sigma \mathbf{M}_2}(\boldsymbol{\eta}^{k-1})$, which implies
\begin{equation*}
    \boldsymbol{w}^1 = J_{\sigma \mathbf{M}_2}(\boldsymbol{\eta}^0) = \boldsymbol{\eta}^0 -\sigma \boldsymbol{q}^1. 
\end{equation*}
In parallel, by the stipulation of $\boldsymbol{p}^1$, we observe that
\begin{equation*}
    0 \in \partial g(\boldsymbol{p}^1) + (\boldsymbol{\eta}^0 - 2\sigma \boldsymbol{q}^1 + \sigma \boldsymbol{p}^1 ).
\end{equation*}
It follows from Theorem 23.5 in \cite{rockafellar1970convex} that
\begin{equation*}
    \boldsymbol{p}^1 \in \partial g^*(-(\boldsymbol{\eta}^0 -2\sigma \boldsymbol{q}^1 + \sigma \boldsymbol{p}^1 )).
\end{equation*}
Hence, $-\sigma \boldsymbol{p}^1 = -\sigma \partial g^*(-(\boldsymbol{\eta}^0 - 2\sigma \boldsymbol{q}^1 + \sigma \boldsymbol{p}^1)$, implying that
{\footnotesize
\begin{equation*}
    2(\boldsymbol{\eta}^0-\sigma \boldsymbol{q}^1) \!-\! \boldsymbol{\eta}^0 \in 2(\boldsymbol{\eta}^0 \!-\! \sigma \boldsymbol{q}^1) \!-\! \boldsymbol{\eta}^0 \!+\! \sigma \boldsymbol{p}^1 \!-\! \sigma \partial g^*(-(\boldsymbol{\eta}^0 \!-\! 2\sigma \boldsymbol{q}^1 \!+\! \sigma \boldsymbol{p}^1)).
\end{equation*}}
Since $\mathbf{M}_1 = \partial (g^*\circ(-\mathbf{I}))$, we have
{\footnotesize
\begin{equation*}
    2(\boldsymbol{\eta}^0 - \sigma \boldsymbol{q}^1) - \boldsymbol{\eta}^0 \in \boldsymbol{\eta}^0 -2\sigma \boldsymbol{q}^1 + \sigma \boldsymbol{p}^1  +\sigma M_1(\boldsymbol{\eta}^0 -2\sigma \boldsymbol{q}^1 + \sigma \boldsymbol{p}^1 ), 
\end{equation*}}
which implies $\boldsymbol{x}^1 := \mathbf{J}_{\sigma \mathbf{M}_1}(2\boldsymbol{w}^1 - \boldsymbol{\eta}^0) = \boldsymbol{\eta}^0 - 2\sigma \boldsymbol{q}^1 + \sigma \boldsymbol{p}^1$. 
Hence, we have $\boldsymbol{\eta}^1 := \boldsymbol{\eta}^0 + 2(\boldsymbol{x}^1 - \boldsymbol{w}^1) = \boldsymbol{\eta}^0 + 2\sigma(\boldsymbol{p}^1 - \boldsymbol{q}^1)$. 
    {\footnotesize
\begin{equation*}
    \begin{aligned}
        \boldsymbol{\eta}^1& = \boldsymbol{\eta}^0 + 2(\boldsymbol{x}^1 - \boldsymbol{w}^1)=\boldsymbol{\eta}^0+2(\mathbf{J}_{\sigma \mathbf{M}_1}(2\boldsymbol{w}^1-\boldsymbol{\eta}^0)-\mathbf{J}_{\sigma \mathbf{M}_2}(\boldsymbol{\eta}^0)) \\
        &=\boldsymbol{\eta}^0+2(\mathbf{J}_{\sigma \mathbf{M}_1}(2\mathbf{J}_{\sigma \mathbf{M}_2}(\boldsymbol{\eta}^0))-\mathbf{J}_{\sigma \mathbf{M}_1}(-\boldsymbol{\eta}^0)-\mathbf{J}_{\sigma \mathbf{M}_2}(\boldsymbol{\eta}^0)) \\
        &=(2\mathbf{J}_{\sigma \mathbf{M}_1}-\mathbf{I})(2\mathbf{J}_{\sigma \mathbf{M}_2}-\mathbf{I})(\boldsymbol{\eta}^0)=\mathbf{R}_{\sigma \mathbf{M}_1}\mathbf{R}_{\sigma \mathbf{M}_2}\boldsymbol{\eta}^0.
    \end{aligned}
\end{equation*}
}

It follows that the update of $\boldsymbol{\eta}^1, \boldsymbol{w}^1$ in Algorithm \ref{HPR2_alg} are the same as that in \eqref{fix_eq}. 
Hence, we prove the statement for $k = 0$. 
Assume that the statement holds for some $k \ge 1$. 
For $k := k + 1$, we can prove that the statement holds similarly to the case $k = 0$. 
Thus, we prove the statement holds for any $k \ge 0$ by induction.
\end{proof}

In the following, we demonstrate the convergence of the sequences $\{\boldsymbol{x}^k\}$ and $\{\boldsymbol{p}^k\}$ in Algorithm \ref{HPR2_alg} to the respective optimal points of the dual and primal problems, thereby confirming the algorithm's effectiveness.

\begin{theorem}
\label{thm_converge}
    The sequence $\{\boldsymbol{w}^k\}$ and $\{\boldsymbol{x}^k\}$ generated by Algorithm \ref{HPR2_alg} converge to the optimal solution $\boldsymbol{w}^*$ of \eqref{dual_pro}. 
    Moreover, when the regularization term $g(\boldsymbol{p})$ in \eqref{equi_form} is assumed convex, the sequence $\{\boldsymbol{p}^k\}$, $\{\boldsymbol{q}^k\}$generated by Algorithm \ref{HPR2_alg} converge to the optimal solution $\boldsymbol{p}^*$, $\boldsymbol{q}^*$ of \eqref{equi_form}.
\end{theorem}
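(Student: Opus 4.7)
The plan is to reduce everything to the convergence of the underlying variable $\{\boldsymbol{\eta}^k\}$ and then transfer that convergence to the algorithmic sequences $\{\boldsymbol{w}^k\}, \{\boldsymbol{x}^k\}, \{\boldsymbol{p}^k\}, \{\boldsymbol{q}^k\}$ via continuity of the resolvents together with a KKT-based identification of the limit. By Proposition \ref{conicide_pro} it suffices to work with the fixed-point iteration $\boldsymbol{\eta}^{k+1} = \mathbf{T}\boldsymbol{\eta}^k$, where $\mathbf{T} := \mathbf{R}_{\sigma\mathbf{M}_1}\mathbf{R}_{\sigma\mathbf{M}_2}$; by Remark \ref{remark_nonexpansive} this operator is nonexpansive.

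To upgrade nonexpansiveness to genuine convergence, I would exploit the quadratic structure of $f(\boldsymbol{h}) = \tfrac{1}{2}\|\boldsymbol{y}-\mathbf{A}\boldsymbol{h}\|_2^2$: because $\nabla f$ is globally Lipschitz, the conjugate $f^*$ is strongly convex on its effective domain, so $\mathbf{M}_2 = \partial f^*$ satisfies a strong monotonicity property. This is exactly the hypothesis under which the classical Lions--Mercier theorem \cite{lions1979splitting} guarantees that the pure Peaceman--Rachford iteration $\boldsymbol{\eta}^{k+1} = \mathbf{T}\boldsymbol{\eta}^k$ converges to a fixed point $\boldsymbol{\eta}^*$. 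Setting $\boldsymbol{w}^* := \mathbf{J}_{\sigma\mathbf{M}_2}\boldsymbol{\eta}^*$ and unrolling $\mathbf{R}_{\sigma\mathbf{M}_1}\mathbf{R}_{\sigma\mathbf{M}_2}\boldsymbol{\eta}^* = \boldsymbol{\eta}^*$ gives $0 \in \mathbf{M}_1(\boldsymbol{w}^*) + \mathbf{M}_2(\boldsymbol{w}^*)$, i.e., \eqref{rooting_pro}, so $\boldsymbol{w}^*$ solves \eqref{dual_pro}. From the proof of Proposition \ref{conicide_pro} one has $\boldsymbol{w}^{k+1} = \mathbf{J}_{\sigma\mathbf{M}_2}(\boldsymbol{\eta}^k)$ and $\boldsymbol{x}^{k+1} = \mathbf{J}_{\sigma\mathbf{M}_1}(2\boldsymbol{w}^{k+1}-\boldsymbol{\eta}^k)$; nonexpansiveness of the resolvents, applied to $\boldsymbol{\eta}^k \to \boldsymbol{\eta}^*$, yields $\boldsymbol{w}^k \to \boldsymbol{w}^*$, and the fixed-point identity collapses $\mathbf{J}_{\sigma\mathbf{M}_1}(2\boldsymbol{w}^*-\boldsymbol{\eta}^*)$ to $\boldsymbol{w}^*$, so $\boldsymbol{x}^k \to \boldsymbol{w}^*$ as well.

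For the primal part, assume $g$ is convex. The updates $\boldsymbol{q}^{k+1} = \mathbf{J}_{\sigma\partial f}(\boldsymbol{\eta}^k)$ and $\boldsymbol{p}^{k+1} = \mathbf{J}_{\sigma^{-1}\partial g}(\sigma^{-1}(2\sigma\boldsymbol{q}^{k+1}-\boldsymbol{\eta}^k))$ are continuous in $\boldsymbol{\eta}^k$, so $\boldsymbol{q}^k \to \boldsymbol{q}^*$ and $\boldsymbol{p}^k \to \boldsymbol{p}^*$ for the natural limits. Passing the update $\boldsymbol{\eta}^{k+1} = \boldsymbol{\eta}^k + 2\sigma(\boldsymbol{p}^{k+1}-\boldsymbol{q}^{k+1})$ to the limit forces $\boldsymbol{p}^* = \boldsymbol{q}^*$, so primal feasibility of \eqref{equi_form} holds. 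Reading off the first-order conditions of the two proximal subproblems gives $\boldsymbol{w}^* = \mathbf{A}^\top(\mathbf{A}\boldsymbol{q}^* - \boldsymbol{y}) = \nabla f(\boldsymbol{q}^*)$ and $-\boldsymbol{w}^* \in \partial g(\boldsymbol{p}^*)$; combined with $\boldsymbol{p}^* = \boldsymbol{q}^*$ these are precisely the KKT conditions of \eqref{equi_form}, and strong duality (convex $g$ plus convex quadratic $f$) identifies $(\boldsymbol{p}^*, \boldsymbol{q}^*)$ as the primal optimum.

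The main obstacle I anticipate is the first step, namely securing genuine convergence of the pure (unrelaxed) Peaceman--Rachford iteration, since for arbitrary maximal monotone splittings plain PR is known to fail to converge --- one typically needs Douglas--Rachford averaging or a Halpern correction. The argument here rests on harvesting the extra regularity of $\mathbf{M}_2 = \partial f^*$ coming from the Lipschitz smoothness of $\nabla f$. Once that regularity is pinned down, every subsequent step --- the dual transfer, the primal feasibility from the limit of the $\boldsymbol{\eta}$-update, and the KKT identification --- is essentially a continuity argument.
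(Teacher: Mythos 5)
Your overall architecture matches the paper's: reduce to the $\boldsymbol{\eta}$-iteration via Proposition~\ref{conicide_pro}, transfer convergence to $\boldsymbol{w}^k,\boldsymbol{x}^k,\boldsymbol{q}^k,\boldsymbol{p}^k$ through continuity of the resolvents, and identify the limit through the first-order conditions $\boldsymbol{w}^{k+1}\in\partial f(\boldsymbol{q}^{k+1})$, $-\boldsymbol{x}^{k+1}\in\partial g(\boldsymbol{p}^{k+1})$, $\boldsymbol{p}^*-\boldsymbol{q}^*=0$. Those transfer and identification steps are sound. The gap is in your first step, and it is exactly the obstacle you flagged. Smoothness of $f$ does give strong convexity of $f^*$ and hence strong monotonicity of $\mathbf{M}_2=\partial f^*$; but for the \emph{unrelaxed} Peaceman--Rachford map, strong monotonicity of $\mathbf{M}_2$ alone only yields the Fej\'er-type estimate
\begin{equation*}
\|\mathbf{T}\boldsymbol{\eta}-\mathbf{T}\boldsymbol{\eta}'\|^2\le\|\boldsymbol{\eta}-\boldsymbol{\eta}'\|^2-4\sigma\mu\,\|\mathbf{J}_{\sigma\mathbf{M}_2}\boldsymbol{\eta}-\mathbf{J}_{\sigma\mathbf{M}_2}\boldsymbol{\eta}'\|^2,
\end{equation*}
which guarantees convergence of the \emph{shadow} sequence $\mathbf{J}_{\sigma\mathbf{M}_2}\boldsymbol{\eta}^k$ (hence of $\boldsymbol{w}^k$), but not of $\boldsymbol{\eta}^k$ itself, nor asymptotic regularity $\boldsymbol{\eta}^{k+1}-\boldsymbol{\eta}^k\to 0$. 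This is the content of the classical Lions--Mercier/Bauschke--Combettes statement: under uniform monotonicity of one operator, only $\mathbf{J}_{\sigma\mathbf{M}_2}\boldsymbol{\eta}^k$ is shown to converge. To make $\mathbf{R}_{\sigma\mathbf{M}_2}$ a strict contraction (and thus get $\boldsymbol{\eta}^k\to\boldsymbol{\eta}^*$) you need $\mathbf{M}_2$ to be strongly monotone \emph{and} Lipschitz, i.e.\ $f^*$ both strongly convex and smooth, i.e.\ $f$ strongly convex, i.e.\ $\mathbf{A}$ of full column rank --- precisely the non-ill-posed case the paper excludes (here $N_{\mathrm{RF}}P<N$, so $\mathrm{dom}\,f^*=\mathrm{range}(\mathbf{A}^{\top})$ is a proper subspace, $\mathbf{R}_{\sigma\mathbf{M}_2}$ acts as $-\mathbf{I}$ on its orthogonal complement, and the composition with a merely nonexpansive $\mathbf{R}_{\sigma\mathbf{M}_1}$ can be an isometry there, so $\boldsymbol{\eta}^k$ may fail to converge).

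This matters because every later step of your argument leans on $\boldsymbol{\eta}^k\to\boldsymbol{\eta}^*$: you get $\boldsymbol{x}^k\to\boldsymbol{w}^*$ by evaluating $\mathbf{J}_{\sigma\mathbf{M}_1}(2\boldsymbol{w}^*-\boldsymbol{\eta}^*)$, and $\boldsymbol{q}^{k+1}=(\mathbf{A}^{\top}\mathbf{A}+\sigma\mathbf{I})^{-1}(\mathbf{A}^{\top}\boldsymbol{y}+\boldsymbol{\eta}^k)$ and $\boldsymbol{p}^{k+1}$ are continuous functions of $\boldsymbol{\eta}^k$, so their convergence is not secured either. (The paper's own proof sidesteps this by asserting uniqueness of the fixed point and citing an external convergence theorem for $\{\boldsymbol{\eta}^k\}$; it does not attempt your strong-monotonicity route.) To repair your version you would either need an additional hypothesis making $\mathbf{M}_2$ (or $\mathbf{M}_1$) single-valued and Lipschitz, or restructure the proof to work entirely with the shadow sequences: the estimate above already gives $\boldsymbol{w}^k\to\boldsymbol{w}^*$ and, with a summability argument, $\boldsymbol{q}^k=\nabla\hat f^*(\boldsymbol{\eta}^{k-1})$ can be handled, but $\boldsymbol{x}^k$ and $\boldsymbol{p}^k$ still require control of $\boldsymbol{\eta}^{k+1}-\boldsymbol{\eta}^k$, which your current argument does not provide.
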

\begin{proof}
Since $\operatorname{Fix}\left(\mathbf{R}_{\sigma \mathbf{M}_1}\mathbf{R}_{\sigma \mathbf{M}_2}\right)$ is a closed convex set \cite[Corollary 4.24]{Bauschke2011monotone_book}, the fixed-point is unique. 
Using the convergence result of \cite[Theorem 22]{davydov2022non_monotone}, we can directly obtain that the sequence $\{\boldsymbol{\eta}^k\}$ converges to the fixed-point $\boldsymbol{\eta}^*$ of the PR iteration. 
Since the resolvent $\mathbf{J}_{\sigma \mathbf{M}_2}$ is non-expansive
we directly obtain 
$$
\lim_k \boldsymbol{w}^k=\lim_k \mathbf{J}_{\sigma \mathbf{M}_2}\left(\boldsymbol{\eta}^k\right)=\mathbf{J}_{\sigma \mathbf{M}_2}\left(\boldsymbol{\eta}^*\right).
$$
As show in \cite{lions1979splitting}, if $\boldsymbol{\eta}^*$ is the fixed-point of the iteration, then $\boldsymbol{w}^*=\mathbf{J}_{\sigma \mathbf{M}_2}\left(\boldsymbol{\eta}^*\right)$.
Additionally, it can be obtained that
$$\boldsymbol{x}^{k+1}-\boldsymbol{w}^{k+1}=\frac{1}{2}(\boldsymbol{\eta}^{k+1}-\boldsymbol{\eta}^k)\rightarrow 0.$$
Thus, $\boldsymbol{x}^k$ also converges to to the optimal solution $\boldsymbol{w}^*$ of \eqref{dual_pro}.

Next, we prove the convergence of the sequence $\left\{\boldsymbol{q}^k\right\}$. 
According to Algorithm \ref{HPR2_alg}, we have for all $k \geq 0$, 
$$
\boldsymbol{q}^{k+1}=(\mathbf{A}^{\top}\mathbf{A}+\sigma \mathbf{I})^{-1}(\mathbf{A}^{\top} \boldsymbol{y}+\boldsymbol{\eta}^{k}).
$$
Denote $\hat{f}(\boldsymbol{q}):=f(\boldsymbol{q})+\frac{\sigma}{2}\left\|\boldsymbol{q}\right\|^2$, which is a strongly convex function. 
Thus, $\hat{f}^*$ is essentially smooth \cite[Theorem 26.3]{rockafellar1970convex}. 
The first-order optimality condition of the above iteration of $\boldsymbol{q}^{k+1}$ implies $0 \in \partial \hat{f}\left(\boldsymbol{q}^{k+1}\right)-\boldsymbol{\eta}^k$. 
Since $\hat{f}$ is a proper closed convex function, by \cite[Theorem 23.5]{rockafellar1970convex}, the first-order optimality condition is equivalent to
$$
\boldsymbol{q}^{k+1}=\nabla \hat{f}^*\left(\boldsymbol{\eta}^k\right).
$$
It follows from the convergence of $\left\{\boldsymbol{\eta}^k\right\}$ and the continuity of $\nabla \hat{f}^*$ \cite[Theorem 25.5]{rockafellar1970convex} that $\left\{\boldsymbol{q}^k\right\}$ is convergent. 
Note that 
$$
\boldsymbol{\eta}^{k+1}=\boldsymbol{\eta}^{k}+2\sigma(\boldsymbol{p}^{k+1}-\boldsymbol{q}^{k+1})
$$ 
from Algorithm \ref{HPR2_alg}, and the sequence $\{\boldsymbol{\eta}^k\}$ is convergent, we obtain $\lim_k (\boldsymbol{p}^k-\boldsymbol{q}^k)=0$. 
Then, by taking the limit, we obtain $\lim_k \boldsymbol{p}^k = \lim_k \boldsymbol{q}^k= \boldsymbol{q}^*$. 
Hence, $\{\boldsymbol{p}_k\}$ is convergent.

Assume that $\left(\boldsymbol{p}^*, \boldsymbol{q}^*, \boldsymbol{x}^*\right)$ is the limit point of the sequence $\left\{\boldsymbol{p}^k, \boldsymbol{q}^k, \boldsymbol{x}^k\right\}$. 
Since $\boldsymbol{\eta}^{k+1}=\boldsymbol{\eta}^{k}+2\sigma(\boldsymbol{p}^{k+1}-\boldsymbol{q}^{k+1})$ from Algorithm \ref{HPR2_alg}, we can obtain $\boldsymbol{p}^*-\boldsymbol{q}^*=\boldsymbol{0}$ by taking the limit. 
It follows that 
$$
\lim_k \boldsymbol{w}^{k}=\lim_k [\boldsymbol{w}^k+\sigma(\boldsymbol{p}^k-\boldsymbol{q}^k)]=\lim_k \boldsymbol{x}^k=\boldsymbol{x}^*.
$$
By Algorithm \ref{HPR2_alg}, we have
\begin{equation*}
    0 \in \partial f(\boldsymbol{q}^{k+1})-\boldsymbol{\eta}^{k}+\sigma \boldsymbol{q}^{k+1}.
\end{equation*}
Since $\boldsymbol{w}^{k+1}=\boldsymbol{\eta}^{k} -\sigma \boldsymbol{q}^{k+1}$, we have
\begin{equation*}
    \boldsymbol{w}^{k+1} \in \partial f\left(\boldsymbol{q}^{k+1}\right).
\end{equation*}
Similarly, we have 
\begin{equation*}
    0 \in \partial g(\boldsymbol{p}^{k+1}) + (\boldsymbol{\eta}^k - 2\sigma \boldsymbol{q}^{k+1} + \sigma \boldsymbol{p}^{k+1} ).
\end{equation*}
Since $\boldsymbol{x}^{k+1} = \boldsymbol{\eta}^k - 2\sigma \boldsymbol{q}^{k+1} + \sigma \boldsymbol{p}^{k+1}$, we have 
\begin{equation*}
    -\boldsymbol{x}^{k+1} \in \partial g(\boldsymbol{p}^{k+1}).
\end{equation*}
Hence, we have 
$$
\boldsymbol{w}^{k+1} \in \partial f\left(\boldsymbol{q}^{k+1}\right), \quad-\boldsymbol{x}^{k+1} \in \partial g\left(\boldsymbol{p}^{k+1}\right).
$$
Together with $\boldsymbol{p}^*-\boldsymbol{q}^*=\boldsymbol{0}$ and taking limit, we have
$$
\boldsymbol{x}^* \in \partial f\left(\boldsymbol{q}^*\right), \quad -\boldsymbol{x}^* \in \partial g\left(\boldsymbol{p}^*\right), \quad \boldsymbol{p}^*-\boldsymbol{q}^*=0.
$$
This completes the proof \cite[Corollary 28.3.1]{rockafellar1970convex}.  
\end{proof}


\section{Implementing PR Splitting Method with Deep Equilibrium Network}
\label{sec_Implementing PR Splitting Method with Deep Equilibrium Network}

In the previous discussion, a fixed-point equation with non-expansive property is constructed to solve the dual problem, ensuring controllable Lipschitz constants suitable for the DEQ model.  
Consequently, we implement this using a single fixed layer to adapt the fixed-point equation. 
Specifically, we unfold one single iteration of Algorithm \ref{HPR2_alg} by replacing the proximal operator with a trainable NN, thereby modeling the fixed-point equation with a fixed layer.
This approach not only inherits DEQ's low complexity but also ensures theoretical convergence and optimality of the PR splitting approach.

\subsection{NN Approximation of the Nonlinear Proximal Operator}

Given the neural network's strong ability to approximate nonlinear functions and capture statistical features, it is practical to use NNs to learn the nonlinear proximal operator, which is correlated with the prior distribution of the channel. 

Similar to the deep unfolding approach of \cite{zhang2018ista_net}, we consider replacing $\operatorname{Prox}_{\sigma^{-1} g}$ with a trainable NN $R_\theta$. 
Specifically, $R_{\theta}$ is constructed using a classical residual network \cite{he2016deep}, which facilitates efficient training of deeper models by leveraging shortcut connections for learning residual mappings. The residual network is mainly composed of four residual blocks, each of which is activated by two convolution layers and two RELU functions. More specifically, the convolutional layers are equipped with $3\!\times\! 3$ kernels and a fixed number of $64$ feature maps. 

\begin{figure}[b]
\centering
\includegraphics[width=1\linewidth]{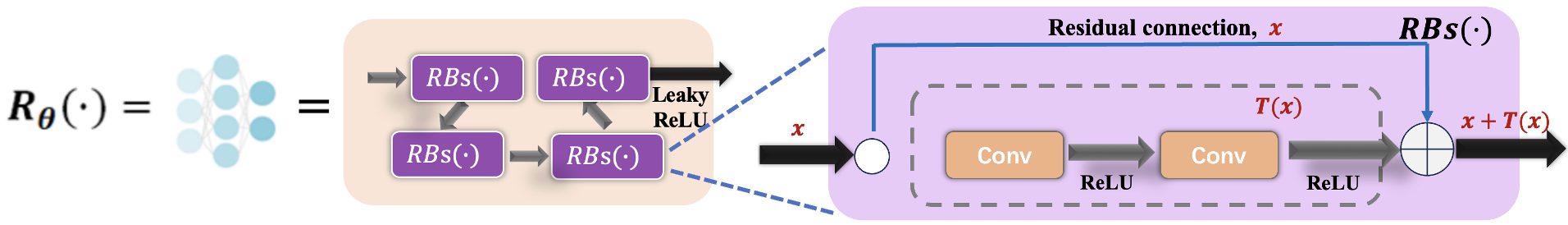}
\caption{The design of a NN block $R_\theta$ to approximate the proximal operator, denoted by $R_\theta \rightarrow \operatorname{Prox}_{\sigma^{-1}g}$.}
\label{Rrox}
\end{figure}

\subsection{Deep Equilibrium Network}

The Deep Equilibrium (DEQ) model is a neural architecture that trains a single fixed layer to model the equilibrium state of a fixed-point equation, enabling efficient inference \cite{bai2019DEM}. 
We adopt the DEQ model to learn the network parameter $\theta$ since, compared to traditional deep unfolding methods, it not only ensures theoretical guarantees from its design of a fixed-point equation but also significantly reduces memory overhead and training complexity \cite{yu2023adaptive,zhao2022deep_snapshot,bai2019DEM}.

The iterative variable $\boldsymbol{\eta}^k$ in DEQ model can be represented as:
$$
\begin{aligned}
\lim _{k \rightarrow+\infty} \boldsymbol{\eta}^{(k)} & =\lim _{k \rightarrow+\infty} f_\theta\left(\boldsymbol{\eta}^{(k)} ; \boldsymbol{y}\right) \\
& \equiv \hat{\boldsymbol{\eta}}=f_\theta(\hat{\boldsymbol{\eta}} ; \boldsymbol{y}),
\end{aligned}
$$
where $\hat{\boldsymbol{\eta}}$ denotes the fixed-point in the network and $f_\theta$ denotes the fixed layer of the neural architecture with a fixed parameter $\theta$, which contains the the trainable network $R_{\theta}$.

To optimize network parameters $\theta$, stochastic gradient descent \cite{yu2022hybrid} is used to minimize a loss function as follows:
$$
\theta^*=\arg \min _\theta \frac{1}{m} \sum_{i=1}^m \ell\left(f_\theta\left(\hat{\boldsymbol{x}}_i ; \boldsymbol{y}_i, \right), \boldsymbol{h}_i^{*}\right),
$$
where $m$ is the number of training samples,  
$\hat{\boldsymbol{x}}_i$ denotes the fixed-point generated by the network iteration, $\boldsymbol{h}_i^{*}$ is the ground truth channel of the $i$-th training sample, and $\boldsymbol{y}_i$ is the paired measurement. 
$\ell(\cdot, \cdot)$ is the loss function, defined by the mean squared error (MSE), as
$$
\ell\left(\hat{\boldsymbol{x}}, \boldsymbol{h}^{*}\right)=\frac{1}{2}\left\|\hat{\boldsymbol{x}}-\boldsymbol{h}^{*}\right\|_2^2.
$$

Then, we calculate the loss gradient. 
Let $\ell$ be an abbreviation of $\ell\left(\hat{\boldsymbol{x}}, \boldsymbol{h}^{\ast}\right)$, then the loss gradient is \cite{bai2019DEM}:
{
\small
\begin{equation*}
    \frac{\partial \ell}{\partial \theta}=\left[\frac{\partial f_\theta(\hat{\boldsymbol{x}} ; \boldsymbol{y})}{\partial \theta}\right]^{\top}\left[\boldsymbol{I}-\left.\frac{\partial f_\theta(\boldsymbol{x} ; \boldsymbol{y})}{\partial \boldsymbol{x}}\right|_{\boldsymbol{x}=\hat{\boldsymbol{x}}}\right]^{-\top}\left(\hat{\boldsymbol{x}}-\boldsymbol{h}^{\ast}\right),
\end{equation*}
}
where ${ }^{-\top}$ denotes the inversion followed by transpose. 

Using DEQ model, the memory complexity of our approach stays at $O(1)$ \cite{zhao2022deep_snapshot}, independent of iteration count, and notably lower than deep unfolding's $O(T)$, enabling efficient gradient calculation for the loss term with minimal memory demand. 

\subsection{Implementation and Convergence Analysis}
In the following, we implement PR splitting method with the DEQ model by a fixed layer $f_\theta$, constructing a deep learning-based estimator for channel estimation. 
%
Unfolding a single iteration of Algorithm \ref{HPR2_alg}, it can be observed that the algorithm consists of three effective linear terms (steps 4, 7, and 8) and one nonlinear term (step 6) within each loop. 
Thus, by replacing the the nonlinear proximal term with the aforementioned NN block $R_\theta$, which is denoted as $R_\theta \rightarrow \operatorname{Prox}_{\sigma^{-1}g}$, the iterative process of variable $\boldsymbol{\eta}^k$ in Algorithm \ref{HPR2_alg} is transformed into the following expression:
\begin{equation}
  \label{iter_MEDOS}
  \boldsymbol{\eta}^{k+1}_R = f_\theta(\boldsymbol{\eta}^k_R)=f_{\text{LT}_3}\circ f_{\text{LT}_2}\circ R_\theta \circ f_{\text{LT}_1}(\boldsymbol{\eta}^k_R).
\end{equation}
Here, $f_{\text{LT}_i}$denotes the $i$-th linear term (LT) in Algorithm \ref{HPR2_alg}, and $\boldsymbol{\eta}^k_R$ denotes the intermediate variable produced by the network. 
Fig. \ref{DR-PR-net} illustrates the detailed iterative process, where $\boldsymbol{\eta}^k$ iterates to a fixed point through the fixed layer, resulting in $\boldsymbol{x}^L$ as the final channel estimation.
The estimator with the whole network architecture is called as Peaceman-Rachford splitting method implemented with Deep Equilibrium Network (denoted as PR-DEN for short).

%
\begin{figure}[b]
\centering
\includegraphics[width=0.85\linewidth]{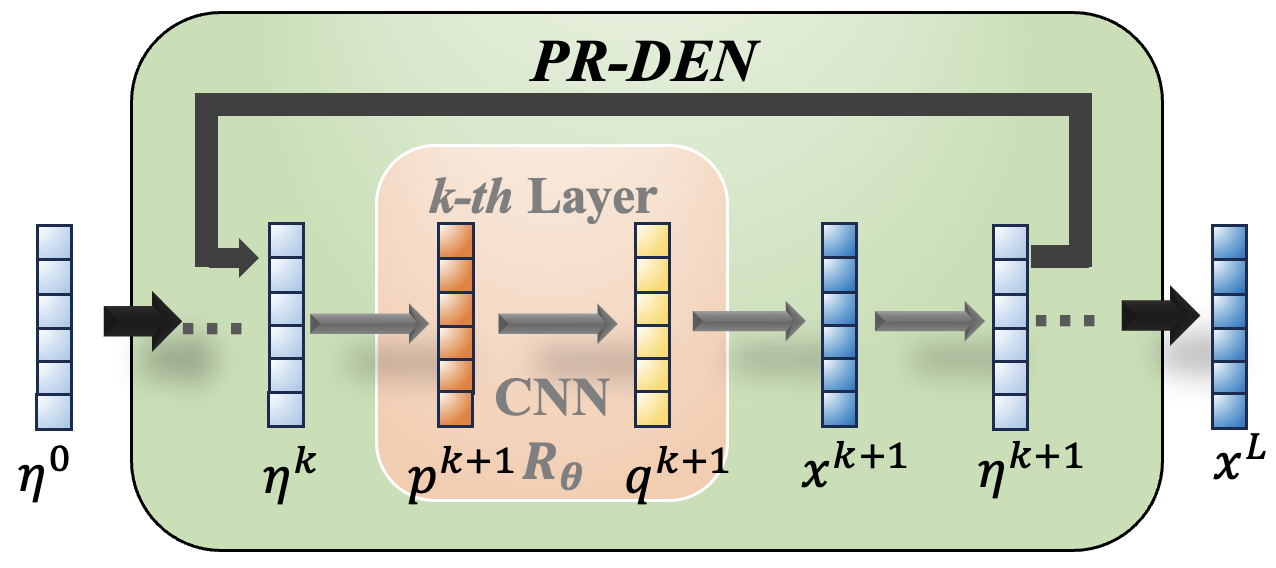}
\caption{Peaceman-Rachford splitting method implemented with Deep Equilibrium Network (PR-DEN) approach.}
\label{DR-PR-net}
\end{figure}

Consequently, we focus on demonstrating the theoretical advantages of PR-DEN, highlighting its strict convergence guarantee and the optimality of the converged solution.
Referring to Remark \ref{remark_nonexpansive}, it can be deduced that $\text{Lip}(\mathbf{R}_{\sigma \mathbf{M}_1}\mathbf{R}_{\sigma \mathbf{M}_2})\leq 1$. Without loss of generality, assume $\text{Lip}(\mathbf{R}_{\sigma \mathbf{M}_1}\mathbf{R}_{\sigma \mathbf{M}_2})<1$, since when $\text{Lip}(\mathbf{R}_{\sigma \mathbf{M}_1}\mathbf{R}_{\sigma \mathbf{M}_2})=1$, a perturbation parameter $\varepsilon$ can be added to construct a contraction mapping \cite[Proposition 4.20]{Bauschke2011monotone_book} and the sequence of iteration converge to the same solution, which leads to the following Theorem.

\begin{theorem} \label{thm_lpi}
There exists a NN $R_{\theta}(\cdot)$ which approximates the corresponding proximal operator, such that the sequence $\{\boldsymbol{\eta}^k_R\}$ produced by PR-DEN is strictly convergent with a linear convergence rate.

\end{theorem}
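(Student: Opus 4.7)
My approach would be to combine a constrained-Lipschitz neural network construction with a standard contraction-mapping argument on the unfolded iteration $f_\theta$ in \eqref{iter_MEDOS}. First I would reduce to the strict-contraction case: by Remark \ref{remark_nonexpansive} the composition $\mathbf{R}_{\sigma\mathbf{M}_1}\mathbf{R}_{\sigma\mathbf{M}_2}$ is non-expansive, and the paragraph immediately preceding the theorem already argues that, via the perturbation device of \cite[Proposition 4.20]{Bauschke2011monotone_book}, it suffices to treat the case in which this composition has Lipschitz constant strictly below $1$, say equal to some $L<1$. Under this reduction, the ideal PR map is a Banach contraction with rate $L$.

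Second, invoking the Lipschitz-controlled approximation result of \cite{pabbaraju2020lips}, I would exhibit a residual NN $R_\theta$ of the architecture in Fig. \ref{Rrox} whose Lipschitz constant can be enforced below any prescribed $\tau>0$ (for instance via spectral normalization of each convolutional block) while still approximating $\operatorname{Prox}_{\sigma^{-1}g}$ to arbitrary accuracy on any compact region containing the iterates. Using the decomposition in \eqref{iter_MEDOS} and the Lipschitz chain rule, I would then estimate
\begin{equation*}
\operatorname{Lip}(f_\theta)\;\le\;\operatorname{Lip}(R_\theta)\,\prod_{i=1}^{3}\operatorname{Lip}(f_{\text{LT}_i}),
\end{equation*}
where each linear-term Lipschitz constant is explicitly computable from steps 4, 7 and 8 of Algorithm \ref{HPR2_alg} and depends only on $\mathbf{A}$ and $\sigma$.

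Third, by choosing $\tau$ and the parameter $\sigma$ so that the product above lies strictly below $1$ — which is feasible because substituting $R_\theta=\operatorname{Prox}_{\sigma^{-1}g}$ already yields the contraction constant $L<1$ from the first step, and a sufficiently accurate $R_\theta$ with sufficiently small Lipschitz constant only perturbs this bound slightly — the iteration \eqref{iter_MEDOS} becomes a strict contraction on a complete normed space. Banach's fixed-point theorem then delivers a unique fixed point $\hat{\boldsymbol{\eta}}$ together with the geometric estimate
\begin{equation*}
\|\boldsymbol{\eta}^{k}_R-\hat{\boldsymbol{\eta}}\|\;\le\;\operatorname{Lip}(f_\theta)^k\,\|\boldsymbol{\eta}^0_R-\hat{\boldsymbol{\eta}}\|,
\end{equation*}
which is precisely the strict convergence at a linear rate claimed by the theorem.

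The main obstacle I anticipate is reconciling the two conflicting requirements on $R_\theta$: its Lipschitz constant must be small enough for the composed Lipschitz bound to stay strictly below one, yet the network must remain expressive enough to approximate the nonlinear proximal operator faithfully. Handling this hinges on the existence of small-Lipschitz universal approximators in \cite{pabbaraju2020lips} together with a careful calibration of $\sigma$ so that the product of linear-term Lipschitz constants leaves a usable budget for $R_\theta$; this interplay between $\sigma$ and the enforced Lipschitz bound is the technical crux of the argument.
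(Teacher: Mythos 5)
Your overall architecture (reduce to a strict contraction with constant $\beta<1$ via the perturbation device, then apply Banach) matches the paper's setup, but your central step --- driving $\operatorname{Lip}(f_\theta)\le\operatorname{Lip}(R_\theta)\prod_{i}\operatorname{Lip}(f_{\text{LT}_i})$ below one by calibrating $\sigma$ and spectrally normalizing $R_\theta$ --- does not go through. The non-expansiveness of $\mathbf{R}_{\sigma\mathbf{M}_1}\mathbf{R}_{\sigma\mathbf{M}_2}$ is a property of the \emph{composition} that is not visible in the termwise Lipschitz constants of the factorization used by Algorithm \ref{HPR2_alg}. Concretely, $f_{\text{LT}_1}(\boldsymbol{\eta})=\sigma^{-1}\bigl(2\sigma(\mathbf{A}^\top\mathbf{A}+\sigma\mathbf{I})^{-1}-\mathbf{I}\bigr)\boldsymbol{\eta}+\text{const}$ has Lipschitz constant $\sigma^{-1}$ (exactly, since $\mathbf{A}^\top\mathbf{A}$ is singular here), while the path from $\boldsymbol{p}^{k+1}=R_\theta(\cdot)$ into step 8 carries a factor $2\sigma$; these cancel, so the product you propose to control is bounded below by roughly $2\operatorname{Lip}(R_\theta)$ \emph{independently of} $\sigma$. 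To push it below one you would need $\operatorname{Lip}(R_\theta)<1/2$, which is incompatible with $R_\theta$ approximating $\operatorname{Prox}_{\sigma^{-1}g}$ --- a firmly non-expansive map whose Lipschitz constant is generically $1$ (it is the identity wherever $g$ is locally constant). So the "usable budget" you hope to extract from calibrating $\sigma$ does not exist, and your claim that substituting the exact prox "already yields $L<1$" for the \emph{product} bound conflates the composition's constant with the product of the factors' constants.

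The paper avoids this by never bounding $\operatorname{Lip}(f_\theta)$ multiplicatively. It writes $R_\theta=(R_\theta-\operatorname{Prox}_{\sigma^{-1}g})+\operatorname{Prox}_{\sigma^{-1}g}$, so that
\begin{equation*}
f_\theta(\boldsymbol{\eta})=\mathbf{R}_{\sigma\mathbf{M}_1}\mathbf{R}_{\sigma\mathbf{M}_2}(\boldsymbol{\eta})+f_{\text{LT}_3}\circ f_{\text{LT}_2}\circ\bigl(R_\theta-\operatorname{Prox}_{\sigma^{-1}g}\bigr)\circ f_{\text{LT}_1}(\boldsymbol{\eta}),
\end{equation*}
inherits the contraction factor $\beta<1$ from the exact PR operator, and uses universal approximation only to make the additive error term small (of order $(1-\beta)/\alpha^3$ relative to $\|\boldsymbol{\eta}\|$); no Lipschitz constraint on $R_\theta$ is imposed at all. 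If you want to salvage your route, you should adopt this perturbation-of-a-contraction decomposition rather than the chain rule; the Lipschitz-control machinery of \cite{pabbaraju2020lips} is the wrong tool here because the quantity it controls cannot be made small enough without destroying the approximation.
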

\begin{proof}
As indicated above, the iteration formula for $\boldsymbol{\eta}^k$ in the proposed PR-DEN can be succinctly represented as: 
\begin{equation*}
\label{iter_MEDOS2}
    \boldsymbol{\eta}^{k+1}_R =f_{\text{LT}_3}\circ f_{\text{LT}_2}\circ R_\theta \circ f_{\text{LT}_1}(\boldsymbol{\eta}^k_R).
\end{equation*}
That is, 
\begin{equation*}
\begin{aligned}
    \boldsymbol{\eta}^{k+1}_R =&f_{\text{LT}_3}\!\circ\! f_{\text{LT}_2}\!\circ\!(R_\theta-\operatorname{Prox}_{\sigma^{-1}g}+\operatorname{Prox}_{\sigma^{-1}g}) \!\circ\! f_{\text{LT}_1}(\boldsymbol{\eta}^k_R)\\
    =&f_{\text{LT}_3}\circ f_{\text{LT}_2}\circ( R_\theta-\operatorname{Prox}_{\sigma^{-1}g })\circ f_{\text{LT}_1}(\boldsymbol{\eta}^k_R)\\
    &+f_{\text{LT}_3}\circ f_{\text{LT}_2}\circ \operatorname{Prox}_{\sigma^{-1}g }\circ f_{\text{LT}_1}(\boldsymbol{\eta}^k_R).
\end{aligned}
\end{equation*}
From the proof of proposition \ref{conicide_pro}, we obtain 
\begin{equation*}
    \mathbf{R}_{\sigma \mathbf{M}_1}\mathbf{R}_{\sigma \mathbf{M}_2}(\boldsymbol{\eta}^k_R)=f_{\text{LT}_3}\circ f_{\text{LT}_2}\circ \operatorname{Prox}_{\sigma^{-1}g }\circ f_{\text{LT}_1}(\boldsymbol{\eta}^k_R).
\end{equation*}
Thus, we derive
\begin{equation}
\label{eta^k_R}
\begin{aligned}
        \boldsymbol{\eta}^{k+1}_R =&f_{\text{LT}_3}\circ f_{\text{LT}_2}\circ( R_\theta-\operatorname{Prox}_{\sigma^{-1}g })\circ f_{\text{LT}_1}(\boldsymbol{\eta}^k_R)\\
        +&\mathbf{\mathbf{R}_{\sigma \mathbf{M}_1}\mathbf{R}_{\sigma \mathbf{M}_2}}(\boldsymbol{\eta}^k_R).
\end{aligned}
\end{equation}

By the Universal approximation theorem \cite{kidger2020universal}, there exists a network $R_{\theta}(\cdot)$ approximating the proximal operator $\operatorname{Prox}_{\sigma^{-1}g}$ with arbitrary precision. 
Suppose $\max\{\|f_{\text{LT}_1}\|,\|f_{\text{LT}_2}\|,\|f_{\text{LT}_3}\|\}=\alpha$, $\text{Lip}(\mathbf{\mathbf{R}_{\sigma \mathbf{M}_1}\mathbf{R}_{\sigma \mathbf{M}_2}})=\beta$. There exists $R_\theta$,  such that
\begin{equation}
\label{appro}
   \|(R_\theta-\operatorname{Prox}_{\sigma^{-1}g }) (\boldsymbol{\eta}^k_R)\|<\frac{1-\beta}{\alpha^3}\|\boldsymbol{\eta}^k_R\|.
\end{equation}
Thus, combined \eqref{appro} with \eqref{eta^k_R}, we derive
\begin{equation*}
    \|\boldsymbol{\eta}^{k+1}_R\|<(1-\beta) \|\boldsymbol{\eta}^k_R\|+\beta \|\boldsymbol{\eta}^k_R\|=\|\boldsymbol{\eta}^k_R\|.
\end{equation*}
From Banach fix point theorem, $\{\boldsymbol{\eta}^k_R\}$ is convergent.
\end{proof}

Up to now, PR-DEN approach is proved to be convergent with a linear convergence rate. Consequently, it can be proved that the iterative solution produced by PR-DEN is convergent to the optimal solution of dual problem \eqref{dual_pro}.

\begin{theorem}
\label{thm_prden_opt}
    There exists a NN $R_\theta$ that sufficiently approximates $\operatorname{Prox}_{{\sigma}^{-1}g}$, such that $\boldsymbol{x}_R$ converges to $\boldsymbol{w}^*$, where $\boldsymbol{x}_R$ is the solution produced by PR-DEN, and $\boldsymbol{w}^*$ is the optimal solution of dual problem \eqref{dual_pro}. More specifically, when the regularization term $g(\boldsymbol{p})$ is assumed to be convex, $\boldsymbol{x}_R$ converges to the optimal solution $\boldsymbol{p}^*$ of \eqref{equi_form}.
\end{theorem}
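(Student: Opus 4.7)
The plan is to combine the strict linear convergence of the network iteration (Theorem \ref{thm_lpi}) with the optimality of the exact PR fixed point (Theorem \ref{thm_converge}) by means of a fixed-point stability argument controlled by the universal approximation error. The idea is that, provided $R_\theta$ is close enough to $\operatorname{Prox}_{\sigma^{-1}g}$, the unique equilibrium $\hat{\boldsymbol{\eta}}_R$ of the network map $f_\theta$ must lie close to the equilibrium $\boldsymbol{\eta}^*$ of the exact operator $\mathbf{R}_{\sigma \mathbf{M}_1}\mathbf{R}_{\sigma \mathbf{M}_2}$, and then continuity of the output map $\boldsymbol{\eta}\mapsto \boldsymbol{x}$ in Algorithm \ref{HPR2_alg} transfers the convergence to $\boldsymbol{x}_R$.

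First, I would fix a tolerance $\varepsilon>0$ and use the universal approximation theorem to pick $R_\theta$ with $\sup_{\boldsymbol{z}\in\mathcal{B}} \|R_\theta(\boldsymbol{z})-\operatorname{Prox}_{\sigma^{-1}g}(\boldsymbol{z})\|\le \varepsilon$ on a compact set $\mathcal{B}$ containing all iterates; Theorem \ref{thm_lpi} guarantees that such a bounded set exists, since $\{\boldsymbol{\eta}^k_R\}$ is contractive and hence bounded. The decomposition already derived in the proof of Theorem \ref{thm_lpi},
$$
f_\theta(\boldsymbol{\eta}) - \mathbf{R}_{\sigma \mathbf{M}_1}\mathbf{R}_{\sigma \mathbf{M}_2}(\boldsymbol{\eta}) = f_{\text{LT}_3}\circ f_{\text{LT}_2}\circ (R_\theta-\operatorname{Prox}_{\sigma^{-1}g})\circ f_{\text{LT}_1}(\boldsymbol{\eta}),
$$
together with the bound $\|f_{\text{LT}_i}\|\le \alpha$, yields $\|f_\theta(\boldsymbol{\eta}) - \mathbf{R}_{\sigma \mathbf{M}_1}\mathbf{R}_{\sigma \mathbf{M}_2}(\boldsymbol{\eta})\|\le \alpha^2 \varepsilon$ uniformly on $\mathcal{B}$.

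Second, I would apply the standard contraction-perturbation lemma: if $T_1$ is a $\beta$-contraction with fixed point $\boldsymbol{\eta}^*$ and $\|T_2-T_1\|\le \delta$ on the relevant set, then every fixed point $\hat{\boldsymbol{\eta}}$ of $T_2$ satisfies $\|\hat{\boldsymbol{\eta}}-\boldsymbol{\eta}^*\|\le \delta/(1-\beta)$. Taking $T_1=\mathbf{R}_{\sigma \mathbf{M}_1}\mathbf{R}_{\sigma \mathbf{M}_2}$ (strictly contractive after the small perturbation trick invoked just before Theorem \ref{thm_lpi}) and $T_2=f_\theta$, one obtains $\|\hat{\boldsymbol{\eta}}_R-\boldsymbol{\eta}^*\|\le \alpha^2\varepsilon/(1-\beta)$, which vanishes as $\varepsilon\to 0$. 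Proposition \ref{conicide_pro} identifies $\boldsymbol{\eta}^*$ as the PR fixed point in \eqref{fix_eq}, and Theorem \ref{thm_converge} then guarantees that the exact Algorithm \ref{HPR2_alg} produces $\boldsymbol{x}^k\to \boldsymbol{w}^*$ and, under convexity of $g$, $\boldsymbol{p}^k\to \boldsymbol{p}^*$. Since the post-iteration map $\boldsymbol{\eta}\mapsto \boldsymbol{x}$ built from steps 4, 6, 7 of the algorithm is Lipschitz when the proximal step is replaced by a bounded-error $R_\theta$, the proximity $\hat{\boldsymbol{\eta}}_R\to \boldsymbol{\eta}^*$ transfers to $\boldsymbol{x}_R\to \boldsymbol{w}^*$ unconditionally, and to $\boldsymbol{x}_R\to \boldsymbol{p}^*$ under convexity of $g$.

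The main obstacle I anticipate lies in step two: reconciling the strict contractivity assumption required to apply the perturbation lemma with the merely non-expansive character of the bare operator $\mathbf{R}_{\sigma \mathbf{M}_1}\mathbf{R}_{\sigma \mathbf{M}_2}$. One must carefully interleave the two small parameters — the neural approximation tolerance $\varepsilon$ and the contraction slack $1-\beta$ introduced by the perturbation trick — so that both can be driven to zero while simultaneously maintaining uniform boundedness of the iterates on a fixed compact $\mathcal{B}$, which in turn is what lets the universal approximation theorem furnish $R_\theta$ with globally controlled error.
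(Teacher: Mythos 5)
Your proposal is correct and follows essentially the same route as the paper: the paper likewise subtracts the two fixed-point equations, uses the contraction factor $\beta<1$ of $\mathbf{R}_{\sigma \mathbf{M}_1}\mathbf{R}_{\sigma \mathbf{M}_2}$ together with the bound $\alpha$ on the linear terms to obtain $\|\boldsymbol{\eta}_R-\boldsymbol{\eta}^*\|\le \tfrac{\alpha^3 M}{1-\beta}\|R_\theta-\operatorname{Prox}_{\sigma^{-1}g}\|$ (your contraction-perturbation lemma, written out in-line), establishes boundedness of $\boldsymbol{\eta}_R$, and then transfers the convergence to $\boldsymbol{x}_R$. The subtlety you flag about interleaving the approximation tolerance with the contraction slack $1-\beta$ is handled in the paper only by the without-loss-of-generality perturbation remark preceding Theorem \ref{thm_lpi}, so your treatment is, if anything, slightly more careful on that point.
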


\begin{proof}
From Theorem 2, we derive the convergence of $\{\boldsymbol{\eta}^k_{R}\}$, which denotes the sequence produced by PR-DEN. That is 
\begin{equation*}
    \lim_{k \rightarrow \infty} \boldsymbol{\eta}^k_R=\boldsymbol{\eta}_R,
\end{equation*}
where $\boldsymbol{\eta}_R$ is the solution of the following fix point equation:
\begin{equation}
\label{eta_R}
    \boldsymbol{\eta}_R=f_{\text{LT}_3}\circ f_{\text{LT}_2}\circ R_\theta \circ f_{\text{LT}_1}(\boldsymbol{\eta}_R).
\end{equation}

Additionally, the optimal solution $\boldsymbol{w}^*$ is equivalent to the solution of the following problem \cite{lions1979splitting}:
\begin{equation}
\label{eta_*}
    \boldsymbol{w}^*=\mathbf{J}_{\sigma \mathbf{M_2}}(\boldsymbol{\eta}^*) \quad\text{s.t.} \quad
    \boldsymbol{\eta}^*=\mathbf{R}_{\sigma \mathbf{M}_1}\mathbf{R}_{\sigma \mathbf{M}_2}(\boldsymbol{\eta}^*).
\end{equation}
Thus, subtracting $\boldsymbol{\eta}_R$ in \eqref{eta_R} from $\boldsymbol{\eta}^*$ in \eqref{eta_*}, we obtain
\begin{equation*}
\begin{aligned}
        \boldsymbol{\eta}_R-\boldsymbol{\eta}^*&=\mathbf{\mathbf{R}_{\sigma \mathbf{M}_1}\mathbf{R}_{\sigma \mathbf{M}_2}}(\boldsymbol{\eta}_R-\boldsymbol{\eta}^*)\\
        &+f_{\text{LT}_3}\circ f_{\text{LT}_2}\circ( R_\theta-\operatorname{Prox}_{\sigma^{-1}g })\circ f_{\text{LT}_1}(\boldsymbol{\eta}_R).
\end{aligned}
\end{equation*}

Suppose $$\max\{\|f_{\text{LT}_1}\|,\|f_{\text{LT}_2}\|,\|f_{\text{LT}_3}\|\}=\alpha,$$ and $\text{Lip}(\mathbf{\mathbf{R}_{\sigma \mathbf{M}_1}\mathbf{R}_{\sigma \mathbf{M}_2}})=\beta$, then
\begin{equation*}
    \| \boldsymbol{\eta}_R-\boldsymbol{\eta}^*\|\leq \beta\| \boldsymbol{\eta}_R-\boldsymbol{\eta}^*\|+\alpha^3\|(R_\theta-\operatorname{Prox}_{\sigma^{-1}g })\|\|\boldsymbol{\eta}_R\|\\.
\end{equation*}
Hence, we have
\begin{equation*}
    \| \boldsymbol{\eta}_R-\boldsymbol{\eta}^*\|\leq \frac{\alpha^3}{1-\beta}\|(R_\theta-\operatorname{Prox}_{\sigma^{-1}g})\|\|\boldsymbol{\eta}_R\|.
\end{equation*}
We claim there exists a consistent upper bound $M$ on $\boldsymbol{\eta}_R$. 
\begin{equation}
\label{upper}
    \|\boldsymbol{\eta}_R\|\leq\|f_{\text{LT}_3}\circ f_{\text{LT}_2}\circ R_\theta \circ f_{\text{LT}_1}(\boldsymbol{\eta}_R)\|+1.
\end{equation}

By the Universal approximation theorem \cite{kidger2020universal}, there exists a network $R_{\theta}(\cdot)$ approximating the proximal operator $\operatorname{Prox}_{\sigma^{-1}g}$ with arbitrary precision. This indicates that for any $\boldsymbol{\eta}$ and $\varepsilon>0$ we have 
\begin{equation*}
    \|f_{\text{LT}_3}\circ f_{\text{LT}_2}\circ (R_\theta-\operatorname{Prox}_{\sigma^{-1}g}) \circ f_{\text{LT}_1}(\boldsymbol{\eta})\|<\epsilon.
\end{equation*}
when $R_\theta$ sufficiently approximates $\operatorname{Prox}_{\sigma^{-1}g}$.

It demonstrates that 
$$
\text{Lip}(f_{\text{LT}_3}\circ f_{\text{LT}_2}\circ R_\theta \circ f_{\text{LT}_1})<1.
$$
Combined with \eqref{upper}, it convinces the fact that $\|\boldsymbol{\eta}_R\|$ is bounded. 
Here, we denote the upper bound as $M$. 
Thus, 
\begin{equation*}
    \| \boldsymbol{\eta}_R-\boldsymbol{\eta}^*\|\leq \frac{\alpha^3M}{1-\beta}\|(R_\theta-\operatorname{Prox}_{\sigma^{-1}g})\|.
\end{equation*}
Since $R_\theta\rightarrow\operatorname{Prox}_{\sigma^{-1}g}$, it can be indicated that 
\begin{equation*}
    \boldsymbol{\eta}_R\rightarrow \boldsymbol{\eta}^* .
\end{equation*}
Using a similar approach, we can derive
\begin{equation*}
  \boldsymbol{x}_R\rightarrow\boldsymbol{w}^*.
\end{equation*}
Additionally, the convexity ensures that $\boldsymbol{p}^*=\boldsymbol{w}^*$, which completes the proof.
\end{proof}

\section{Numerical Results} 

This section presents the simulation results of PR-DEN as shown in Section~\ref{sec_Implementing PR Splitting Method with Deep Equilibrium Network}. 
Specifically, the simulations are conducted using the hybrid far- and near-field channel model illustrated in subsection~\ref{subsec_Hybrid-Field Channel Model and Parameter Settings}.
The detailed parameter settings are also shown in subsection~\ref{subsec_Hybrid-Field Channel Model and Parameter Settings}. We present the simulation results in subsection~\ref{subsec_Simulation Results}.
The source code is publicly available on GitHub \footnote{https://github.com/wushitong1234/PR-DEN}. 
%

\subsection{Hybrid-Field Channel Model}
\label{subsec_Hybrid-Field Channel Model and Parameter Settings}
In the following simulation, we employ a hybrid far- and near-field channel model for an ultra-massive MIMO system. This scenario presents a high-dimensional channel that is a stochastic mixture of far- and near-field components, making channel estimation particularly challenging \cite{yu2022hybrid}. By considering such a complex setting, we aim to demonstrate the performance advantages of our approach.

The generation of far-field or near-field channel is determined by the Rayleigh distance, \textit{i.e.} $D_\text{Rayleigh}=\frac{D^2}{\lambda_c}$, where $D$ is the array aperture and and $\lambda_c$ is the carrier wavelength. 
The propagation channel is represented by a ray-based model of $L$ rays, where the $0$-th ray corresponds to the line-of-sight (LoS) path, while the $l$-th ($l=1,\cdots, L-1$) ray is non-line-of-sight (NLoS) path. 
Specifically, the channel is expressed as \cite{dovelos2021channel}:
\begin{equation*}
    \boldsymbol{h}= \textstyle{\sum_{l=0}^{L}} \beta_{l} \mathbf{a}\left(\phi_{l}, \theta_{l}, r_{l}\right) e^{-j 2 \pi f_{c} \tau_{l}}
\end{equation*}
where the parameter $f_{c}$ is the carrier frequency. 
And the parameters $\beta_{l}, \phi_{l}, \theta_{l}, r_{l}, \mathbf{a}\left(\phi_{l}, \theta_{l}, r_{l}\right)$, and $\tau_{l} $ are respectively the path loss, azimuth angle of arrival (AoA), elevation AoA, distance between the array and the RF source/scatterer, array response vector, and time delay of the $l$-th path. 
The path loss can be represented by the following expression \cite{dovelos2021channel}:
$$
\beta_{l}=\left|\Gamma_{l}\right|\left(\frac{c}{4 \pi f_{c} r_{1}}\right) e^{-\frac{1}{2} k_{\mathrm{abs}} r_{1}},
$$
where $\Gamma_{l}$ is the reflection coefficient,  $ r_{1} $ is the LoS path length, and $k_{\mathrm{abs}}$ is the molecular absorption coefficient and 
\begin{equation*}
    \Gamma_{l} = \begin{cases}
    \scriptstyle\frac{\cos \varphi_{\mathrm{in}, l} - n_{t} \cos \varphi_{\mathrm{ref}, l}}{\cos \varphi_{\mathrm{in}, l} + n_{t} \cos \varphi_{\mathrm{ref}, l}} e^{-\left(\frac{8 \pi^{2} f_{c}^{2} \sigma_{\mathrm{rough}}^{2} \cos ^{2} \varphi_{\mathrm{in}, l}}{c^{2}}\right)}, & \text{if } l > 1, \\
    1, & \text{if } l = 0.
    \end{cases}
\end{equation*}
Here $\varphi_{\mathrm{in}, l}$ is the angle of incidence of the $l$-th path, $\varphi_{\mathrm{ref}, l} = \arcsin \left(n_{t}^{-1} \sin \varphi_{\mathrm{in}, l}\right)$ is the angle of refraction, $ n_{t}$ is the refractive index and $\sigma_{\text {rough }}$ is the roughness coefficient of the reflecting material. 

Since the wavefront is approximately planar in the far field and spherical in the near field, the array responses, which are determined by the distance $r_l$, can be expressed as\cite{yu2022hybrid}: 
$$
\mathbf{a}\left(\phi_{l}, \theta_{l}, r_{l}\right)=\left\{\begin{array}{ll}
\text{vec}(\mathbf{A}^{\text {far }}\left(\phi_{l}, \theta_{l}\right)), & \text { if } r_{l}>D_{\text {Rayleigh }}, \\
\text{vec}(\mathbf{A}^{\text {near }}\left(\phi_{l}, \theta_{l}, r_{l}\right) )& \text { otherwise. }
\end{array}\right.
$$
where $\text{vec}()$ denotes the operation of unfolding a matrix into a vector. Specifically, 
\begin{equation*}
    \begin{cases}
    \mathbf{A}^{\text {far }}\left(\phi_{l}, \theta_{l}\right)_{s_1,s_2}=e^{-j 2 \pi \frac{f_{c}}{c} \mathbf{p}_{s_1,s_2}^{T} \mathbf{t}_{l}},\\
    \mathbf{A}^{\text {near }}\left(\phi_{l}, \theta_{l}, r_{l}\right)_{s_1,s_2}=e^{-j 2 \pi \frac{f_{c}}{c}\left\|\mathbf{p}_{s_1,s_2}-r_{l} \mathbf{t}_{l}\right\|_{2}} .
    \end{cases}
\end{equation*}
where $c$ is the speed of light, and $\mathbf{t}_{l}$ is the unit-length vector in the AoA direction of the $l$-th path, given by $\mathbf{t}_{l}=\left[\sin \theta_{l} \cos \phi_{l}, \sin \theta_{l} \sin \phi_{l}, \cos \theta_{l}\right]^{T}$. 
The coordinate $\mathbf{p}_{s_1,s_2}$ can be expressed as \cite{yu2022hybrid}: 
\begin{equation*}
\begin{aligned}
 \mathbf{p}_{s_1,s_2}=[(m_1-1)d+(m_2-1)(\sqrt{N/S}-1)d_\text{UPA},\\ (n_1-1)d+(n_2-1)(\sqrt{N/S}-1)d_\text{UPA}, 0]^{T}
\end{aligned}
\end{equation*} 
where
$s_1=(m_2-1)(\sqrt{N/S})+n_2$, $s_2=(m_1-1)(\sqrt{N/S})+n_1$, $1\leq m_2,n_2 \leq \sqrt{S}$, and $1\leq m_1,n_1 \leq \sqrt{N/S}$. $S$ refers to the number of UPAs. $d_{\text{UPA}}$ refers to the separation between adjacent UPAs, and $d$ denotes the antenna spacing.

In the simulation, the BS is equipped with 4 UPAs, each containing 256 antennas, for a total of 1024 antennas. The digital combining matrix is set as an identity matrix. Additional key simulation parameters are provided in Table \ref{table2}.

\begin{table}[t]
\caption{Parameter Setting}
\label{table2}
\centering
\begin{tabular}{|l|l|}
\hline
\textbf{Parameter}                   & \textbf{Value}                    \\ \hline
Number of RF chain/ UPA        & $N_{\text{RF}} / \quad S= 4$               \\
Number of BS antennas                & $N = 1024$                       \\
Number of antennas in each UPA       & $N/S = 256$                         \\
Carrier frequency                    & $f_c = 300 \, \mathrm{GHz}$       \\
Antenna spacing                      & $d = 5.0 \times 10^{-4} \, \mathrm{m}$ \\
UPA spacing                     & $d_{\text{UPA}} = 5.6 \times 10^{-2} \, \mathrm{m}$ \\
Pilot length                         & $P = 128$                         \\
Azimuth AoA                          & $\theta_{l} \sim \mathcal{U}(-\pi / 2, \pi / 2)$ \\
Elevation AoA                        & $\phi_{l} \sim \mathcal{U}(-\pi, \pi)$ \\
Angle of incidence                   & $\varphi_{\text{in}, l} \sim \mathcal{U}(0, \pi / 2)$ \\
Number of paths                      & $L = 5$                           \\
Rayleigh distance                    & $D_{\text{Rayleigh}} = 20 \, \mathrm{m}$ \\
LoS path length                      & $r_{1} = 30 \, \mathrm{m}$        \\
Scatterer distance ($l > 1$)         & $r_{l} \sim \mathcal{U}(10,25) \, \mathrm{m}$ \\
Time delay of LoS path               & $\tau_{1} = 100 \, \mathrm{nsec}$ \\
Time delay of NLoS paths ($l > 1$)   & $\tau_{l} \sim \mathcal{U}(100,110) \, \mathrm{nsec}$ \\
Absorption coefficient               & $k_{\text{abs}} = 0.0033 \, \mathrm{m}^{-1}$ \\
Refractive index                     & $n_{t} = 2.24 - j0.025$           \\
Roughness factor                     & $\sigma_{\text{rough}} = 8.8 \times 10^{-5} \, \mathrm{m}$ \\ \hline
\end{tabular}
\end{table}

\subsection{Simulation Results}
\label{subsec_Simulation Results}
Our performance metric is the Normalized Mean Square Error (NMSE), defined as
$
\text{NMSE}=10\log_{10}\frac{\|\mathbf{h}-\hat{\mathbf{h}}\|}{\|\mathbf{h}\|},
$ 
where $\mathbf{h}$ denotes the testing channel while $\hat{\mathbf{h}}$ is the estimated channel in the simulation. 

We calculate the NMSE of comprehensive testing datasets comprising 5,000 samples.
The benchmarks include (a) the classical channel estimation methods with closed-form expression: \textbf{LS}, \textbf{MMSE} \cite{cho2010mimo}, (b) classical iterative algorithm including \textbf{OAMP} \cite{ma2017orthogonal}, \textbf{FISTA} \cite{beck2009fista}, and (c) two NN-based iterative algorithms: \textbf{ISTA-Net+} \cite{zhang2018ista_net}, a classical deep unfolding method and \textbf{FPN-OAMP} \cite{yu2022hybrid}, a recently developed NN-based estimator for THz Ultra-Massive MIMO Channel Estimation.
For a comprehensive comparison with the NN-based algorithms, all networks are trained with an 80,000-sample training set and a 5,000-sample validation set. 
We conduct training for 150 epochs utilizing the Adam optimizer, initialized with a learning rate of 0.001 and a batch size of 128.

\begin{table}[t]
\centering
\caption{Performance comparison of different algorithms under various SNR levels}
\label{table:performance}
\begin{tabular}{cccccc}
\toprule
\multirow{2}{*}{Algorithm} & \multicolumn{5}{c}{SNR (dB)}                   \\ \cmidrule(lr){2-6}
                           & 0     & 5     & 10    & 15    & 20    \\ \midrule
LS                      &   0.0021 & -0.89 & -1.28 & -1.42 & -1.46  \\
MMSE                 &     -2.64 & -4.57 & -6.61 & -8.35 & -9.37  \\
OAMP          &             -2.33 & -4.01 & -5.69 & -6.72 & -7.53   \\
FISTA        &              -3.55 & -5.51 & -7.78 & -9.21 & -10.09   \\
ISTA-Net+   &        -7.05 & -10.42 & -13.38 & -15.82 & -17.53 \\
FPN-OAMP     &             \underline{-10.03} & \underline{-13.52} & \underline{-16.93} & \underline{-20.74} & \underline{-23.99}  \\
PR-DEN     &                \textbf{-10.59} & \textbf{-14.18} & \textbf{-18.01} & \textbf{-21.75} & \textbf{-24.75}  \\ \bottomrule
\end{tabular}
\vspace{+.05in}

\footnotesize{
{Notes:The best and second-best scores are \textbf{highlighted} and \underline{underlined}.}
}

\end{table}

Table \ref{table:performance} illustrates the NMSE performance versus SNR of our PR-DEN and six benchmarks in MIMO system with hybrid channels. 
The results clearly illustrate that PR-DEN significantly outperforms LS and MMSE for different SNR values. 
Moreover, in comparison to network-based unfolding algorithms such as ISTA-Net+ and FPN-OAMP, we observe a notable improvement over ISTA-Net+, and approximately a 5\% performance gain compared to FPN-OAMP.
\begin{figure}[b]
    \centering
    \includegraphics[width=0.493\linewidth]{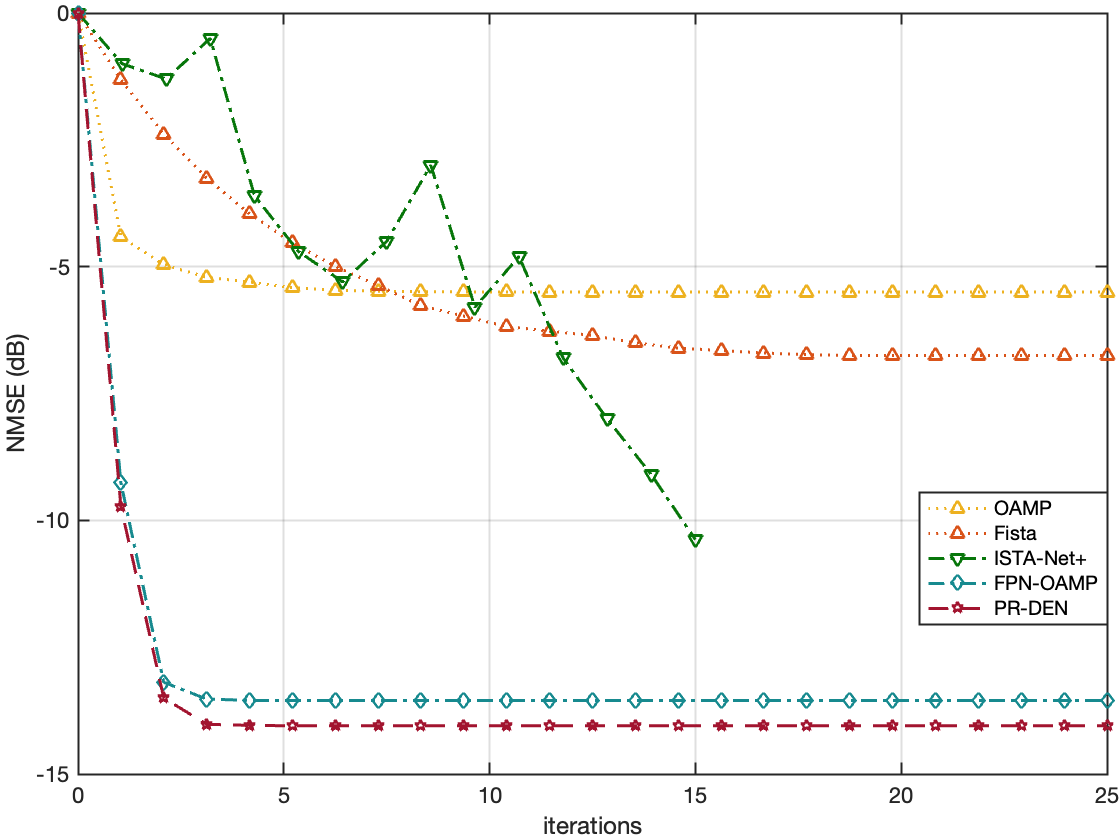}
    \caption{\small{NMSE versus iterations. The curves represent: OAMP (Yellow), Fista (Orange), ISTA-Net+ (Green), FPN-OAMP (Blue), and PR-DEN (Dark Red).}} 
    \label{fig_iter}
\end{figure}
\begin{table}[b]
\centering
\caption{Running time of different algorithms}
\label{table_time}
\resizebox{\columnwidth}{!}{
\begin{tabular}{c|c|c|c|c|c}
\hline
& OAMP & FISTA & ISTA-Net+ & FPN-OAMP & PR-DEN \\
\hline
Center Processor & $0.982$ & $0.103$ & $4.92 \times 10^{-2}$ & $6.39 \times 10^{-3}$ & $7.12 \times 10^{-3}$ \\
\hline
Graphics Processor & / & / & $5.52 \times 10^{-3}$ & $3.61 \times 10^{-4}$ & $4.06 \times 10^{-4}$ \\
\hline
\end{tabular}}
\vspace{0.1in}

\footnotesize{
{Notes: a) Computation times (in seconds) are averaged over five instances at different SNR levels (0:5:20 dB).}
{b) A uniform convergence criterion is used across all algorithms: $\|\boldsymbol{x}^{k+1}-\boldsymbol{x}^{k}\| < 10^{-2}$.}
}
\end{table}

Fig. \ref{fig_iter} illustrates the NMSE performance over iteration steps at an SNR of $5$ dB, compared with the benchmarks, excluding LS and MMSE since these two algorithms are not iterative algorithms. 
The results demonstrate that PR-DEN achieves rapid convergence within $4$ iterations. 
Furthermore, when compared to classical algorithms such as OAMP, FISTA, and ISTA-Net, our approach exhibits superior NMSE performance as early as the second iteration. 
Additionally, compared to FPN-OAMP, our approach converges faster and has better performance in terms of the NMSE validation.

Table \ref{table_time} illustrates a comparison of the computational times between the four benchmark iterative algorithms with PR-DEN.
The computational time is calculated respectively, using a center processor (typical examples known as Central Processing Units) and a graphics processor (typical examples known as Graphics Processing Units). 
FISTA and OAMP primarily involve matrix operations and simple thresholding, which are computationally efficient on a center processor, making graphics processor acceleration unnecessary \cite{beck2009fista}, \cite{ma2017orthogonal}.
Compared to traditional iterative algorithms such as OAMP and FISTA, PR-DEN demonstrates significantly lower computational time. 
In comparison with the network unfolding algorithms, our approach exhibits lower execution time than ISTA-Net+ and is comparable to FPN-OAMP. 
These experimental findings validate the efficiency of our algorithm in handling the channel estimation problem.

\section{Conclusion} \label{sec_conclusion}
This paper introduces a novel channel estimation methodology for MIMO system, addressing the dual problem through the Peaceman-Rachford (PR) splitting approach with the deep equilibrium (DEQ) network. 
This approach constructs non-expansive operators using PR splitting on the Fenchel conjugate of the channel estimation problem, forming a fixed-point equation. 
The DEQ model implicitly learns the proximal operators, enabling efficient iterative steps with constant memory. 
Theoretical analysis provides a convergence proof of the sequence produced by the proposed approach. 
The simulations demonstrate the approach's effectiveness, that our approach has high accuracy and efficiency for channel estimation in MIMO system, compared with six benchmarks.


\bibliographystyle{IEEEtran}


\newpage

\end{document}